\newtheorem{theorem}{Theorem}
\newtheorem{algo}[theorem]{Algorithm}
\newtheorem{lemma}[theorem]{Lemma}
\newtheorem{proposition}[theorem]{Proposition}
\newcounter{proposition}
\begin{document}

\title{Low Complexity MIMO Detection based on Belief Propagation over Pair-wise Graphs}





\author{Seokhyun Yoon,~\IEEEmembership{Member,~IEEE} and Chan-Byoung Chae, \IEEEmembership{Senior Member,~IEEE} \\
\thanks{This work was supported by Basic Science Research Program through the National Research 
Foundation of Korea (NRF) funded by the Ministry of Education, Science and Technology (2012R1A1A2038807) 
and partly by Dankook University Project for funding RICT 2011. 
This work was partially presented at VTC 2011 Spring.}
\thanks{S. Yoon is with the Department of Electronic Engineering, Dankook
University, Yongin-si, Kyunggi-do, Korea, 448-160 (e-mail: syoon@dku.edu)}
\thanks{C.-B. Chae is with the School of Integrated Technology, Yonsei University, Korea (e-mail: cbchae@yonsei.ac.kr)}}




\maketitle \setcounter{page}{1} 
%
%
%

\markboth{\emph{revised, IEEE Trans. on Vehicular Technology},~2013}%
{Pair-wise Markov Random Fields}

\begin{abstract}
This paper considers belief propagation algorithm over pair-wise graphical models to develop low complexity, iterative multiple-input multiple-output (MIMO) detectors. The pair-wise graphical model is a bipartite graph where a pair of variable nodes are related by an observation node represented by the bivariate Gaussian function obtained by marginalizing the posterior joint probability density under the Gaussian input assumption. Specifically, we consider two types of pair-wise models, the fully-connected and ring-type. The pair-wise graphs are sparse, compared to the conventional graphical model in [18], insofar as the number of edges connected to an observation node (edge degree) is only two. Consequently the computations are much easier than those of maximum likelihood (ML) detection, which are similar to the belief propagation (BP) that is run over the fully connected bipartite graph. The link level performance for non-Gaussian input is evaluated via simulations, and the results show the validity of the proposed algorithms. We also customize the algorithm with Gaussian input assumption to obtain the Gaussian BP run over the two pair-wise graphical models and, for the ring-type, we prove its convergence in mean to the linear minimum mean square error (MMSE) estimates. Since the maximum \emph{a posterior} (MAP) estimator for Gaussian input is equivalent to the linear MMSE estimator, it shows the optimality, in mean, of the scheme for Gaussian input.
\end{abstract}

\begin{keywords}
Markov random field, low complexity MIMO detection, graph-based detection, belief propagation, sum-product algorithm, forward-backward recursion.
\end{keywords}

\newpage
\setcounter{page}{1}

\section{Introduction}
Recent work on multi-input and multi-output (MIMO) detections has mainly been focused on so-called sphere decoding \cite{R1, R2, R3, R4, R5, R6}. Sphere decoding is a two-stage detector in which the channel matrix is first converted into an upper triangular form and, utilizing this structure, a tree search is used for joint data detection. Since the full tree search has the same complexity as maximum likelihood (ML) detection, a sort of reduced search algorithm is applied by limiting the search space, e.g., the number of candidate symbols or radius at each tree search stage. One advantage of sphere decoding is that it can, by choosing an appropriate value of radius or list size, provide a tradeoff between performance and complexity. The performance of sphere decoding has been shown to be quite close to that of ML with a reasonable level of complexity \cite{R6}. To produce soft decisions required for channel decoding, however, the search space cannot be set too small.

Another type of MIMO detector, which has received little attention, is the channel truncation approach \cite{R7,R8,R9,R10}. This approach is also a two-stage detector, where the channel is first converted into a bi-diagonal or, more generally, a poly-diagonal form \cite{R9,R10} and, utilizing the effective channel structure, a trellis search, e.g., the Viterbi algorithm or the forward-backward algorithm \cite{R11,R12}, is used for post-joint detection. The method is similar to the concatenated channel-shortening equalizer and maximum likelihood sequence estimator (MLSE) for the inter-symbol interference channel \cite{R13}. 

Another class of MIMO detection worthy of attention is graph based detection \cite{R14,R15,R16,R17,R18,R19,R20}. The approaches are based on the belief propagation (BP) algorithm \cite{R21,R22}. This algorithm has also been extensively studied for the decoding of channel codes, such as the turbo codes and low density parity check codes. In these approaches, the MIMO channel is modeled as a fully-connected bipartite graph, which consists of multiple $N$ observation nodes representing the received signal, multiple $M$ variable nodes representing the hidden data, and the edges connecting the observation nodes with the variable nodes. The resulting graph has the maximal edge degree, i.e., every observation node is connected to every variable node. When applying the BP algorithm~\cite{R21} or the sum-product algorithm~\cite{R22} to such graphs, the complexity is as high as the ML or MAP detector. This is mainly due to the metric computation and the marginalization operation required for the message update at the observation nodes. 

To reduce the computational complexity, the Gaussian BP has been considered in~\cite{R16} and~\cite{R17}, where the input data and messages are all assumed to be Gaussian so that the message and posterior probability can be represented by a pair of mean and variance, resulting in a very simple message update rule. As shown in~\cite{R16} and~\cite{R17}, however, the algorithm converges (though not always) only to the linear minimum mean squared error (LMMSE) solution, which is inferior to the ML detector for non-Gaussian input. On the other hand, \cite{R18} and~\cite{R20} studied complexity reduction via model simplification. Particularly, in~\cite{R18}, it was suggested that to reduce the edge degree some edges in the fully connected bipartite graph should be pruned based on the strength of the channel coefficients. By doing so, not only is the number of messages reduced, but also the marginalization operation on observation nodes can be performed at a much lesser cost. Reduction in the marginalization cost is exponential with the edge-degree reduction, resulting in far less complexity than ML. The problem here, however, is that the performance loss can be more severe with the edge-degree reduction. 

Other interesting graph-based approaches are those in~\cite{R20, R31, R32, R33}, based on the pair-wise Markov random field (MRF)~\cite{R23}. In MRF, we have only one type of node representing the hidden data and the edges reflecting the local dependency among them. The local dependency is represented by potential functions and, specifically in pair-wise MRF, they are functions of one or two variables. In fact, as noticed in~\cite{R20, R31} and~\cite{R33} (also in~\cite{R16} and~\cite{R19}), a multivariate Gaussian function can be decomposed into a product of functions of one or two variables resulting in a fully connected pair-wise MRF. On the other hand, in~\cite{R20}, noticing that BP may not work well for a loopy graph, the authors proposed a tree approximation on the basis of Kullback-Leibler distance (KLD) optimality criterion. In \cite{R32}, the same authors have proposed using the potential functions obtained by two dimensional projection and the expectation-maximization (EM) algorithm based post detection. Bit-based probabilistic data association is another approach to low complexity MIMO detection especially for higher order QAM. In~\cite{R34}, a matrix representation is introduced to represents symbol mapping, by which it can be considered as a linear processing and can be combined as part of MIMO channel giving us a room for complexity reduction for higher order QAM.

In this paper, we investigate a similar approach to the pair-wise MRF based MIMO detector, but with different formulation, i.e., instead of using the potential functions obtained from the direct decomposition of multivariate Gaussian function \cite{R20, R32} or from the two dimensional projection in \cite{R32}, we propose using the functions obtained by marginalizing the posterior joint probability density under the Gaussian input assumption. As implicated in \cite{R31,R33}, the corresponding bipartite graph has an edge degree of only two and the proposed scheme has much less complexity than that of ML/MAP. In addition to the fully connected pair-wise graph, we also consider the ring-type pair-wise graph. The proposed scheme can be regarded as an edge pruning technique, similar to the one in~\cite{R18}. Unlike that of~\cite{R18}, however, the pruning is performed by a linear transformation and the performance degradation compared to the ML/MAP detector is shown to be reasonable with an edge degree of two.


This paper is organized as follows. In the next section, we briefly review the ML/MAP and the graph-based approach to MIMO detection. In Section~\ref{Sec:III}, the proposed iterative detection algorithm is presented based on the fully-connected and ring-type pair-wise models, respectively, for non-Gaussian input. In Section IV, we customize the proposed algorithms under Gaussian input assumption (Gaussian BP), and discuss its convergence property. The performance is extensively evaluated and compared via link-level simulations in Section V and, finally, in Section VI, the concluding remarks are given.

\section{System model, MAP and Graph-based Detection}
\label{Sec:II}
$\emph{System Model}$: A Gaussian MIMO system with an $N\times M$ channel matrix ${\pmb H} (N \geq M)$ is modeled as
\begin{align}
{\pmb y} = {\pmb H} {\pmb x} + {\pmb n} = \sum_{k=1}^M {\pmb h}_k x_k + {\pmb n} \nonumber
\end{align}
where ${\pmb x}$ is an $M \times 1$ transmitted data symbol vector, ${\pmb n}$ is an $N \times 1$ noise vector, ${\pmb y}$ is an $N \times 1$ received signal vector and ${\pmb h}_m$ is the $m$th column of ${\pmb H}$. A symbol ${\pmb n}$ is assumed to be complex Gaussian with mean ${\pmb 0}$ and covariance ${\mathbb E}[{\pmb n} {\pmb n}^H] = \sigma^2 {\pmb I}$ and the transmitted data symbol vector ${\pmb x}$ is assumed to have mean ${\pmb 0}$ and covariance matrix ${\mathbb E}[{\pmb x} {\pmb x}^H] = {\pmb I}$, where ${\mathbb E}(\cdot)$ denotes expectation. In practice, each element of ${\pmb x}$ is usually a $2^m$-ary symbol drawn from a finite alphabet set $\Xi$ of size $2^m$ such as QPSK and 16-QAM.

$\emph{MAP detection}$: The maximum \emph{a posteriori} (MAP) detector selects ${\pmb x}$ that maximizes the \emph{a posteriori} likelihood
\begin{align}
p ({\pmb x} | {\pmb y}) = \frac{p({\pmb y}|{\pmb x} ) p({\pmb x})}{p({\pmb y})} \label{eq:1}
\end{align}
where 
\begin{align}
&p({\pmb y}|{\pmb x} ) = {\mathcal {CN}} \left({\pmb y}; {\pmb {Hx}}, \sigma^2 {\pmb I} \right) \label{eq:3}\\
&p({\pmb x}) = \prod_{j=1}^M p(x_j). \nonumber
\end{align} 
with ${\mathcal {CN}} ({\pmb y}; {\pmb \mu}, {\pmb C})$ representing a multivariate complex Gaussian probability density function (PDF) of mean ${\pmb \mu}$ and covariance ${\pmb C}$ defined as
\begin{align}
{\mathcal{CN}} ({\pmb y}; {\pmb \mu}, {\pmb C}) \equiv 
\frac{1}{  (\pi)^M \text{det} {\pmb C}  }\exp \left( -\frac{1}{2} ({\pmb y} - {\pmb \mu})^H {\pmb C}^{-1} ({\pmb y} - {\pmb \mu})  \right) \nonumber
\end{align}
where the superscript $H$ denotes Hermitian transpose. 
The search space of the MAP is an $M$-dimensional space, $\Xi^M$, and the complexity is $\mathcal{O}(2^{mM})$. When using concatenated channel coding and MIMO, a MIMO detector is required to produce soft-decision values, i.e., log-likelihood ratio (LLR). Denoting the $j$th data symbol as $x_j(b_{j1}, b_{j2}, \cdots, b_{jm})$, where $b_{j,k}$ is the $k$th bit contained in $x_j$. Then, LLR of $b_{jk}$ can be obtained by first marginalizing $p({\pmb x} |{\pmb y})$ over ${\pmb x}  \backslash x_j = (x_1, x_2, \cdots, x_{j-1}, x_{j+1}, \cdots, x_M)$ to get
\begin{align}
\begin{split}
&p (x_j = x | {\pmb y}) = \sum_{{\pmb x} \backslash x_j \in \Xi ^{M-1}} p (x_1, x_2, \cdots, x_M | {\pmb y}  )\\
&= A \cdot \sum_{{\pmb x} \backslash x_j \in \Xi ^{M-1}} p ({\pmb y} | x_1, x_2, \cdots, x_j = x, \cdots, x_M) \prod_{k\neq j} p(x_k) \label{eq:4}
\end{split}
\end{align}
where $A = p^{-1} ({\pmb y})$ is a normalizing constant, and $x_j$'s are assumed to be independent of each other. In (\ref{eq:4}), $p(x_j)$ is the \emph{a priori probability} of $x_j$, which is assumed to be uniformly distributed, i.e., $p(x_j)=1/2^m$ for a modulation size of $2^m$. The LLR for each bit is then computed as
\begin{align}
{\text{LLR}} (b_{j,k}) = \log \left( \frac{p(b_{j,k} = 0 | {\pmb y})}{p(b_{j,k} = 1 | {\pmb y})}    \right) = \log \left( \frac{\sum_{\text{all}~ x_j: b_{jk}=0} p(x_j | {\pmb y} ) }{\sum_{\text{all}~ x_j: b_{jk}=1} p(x_j | {\pmb y})}  \right). \label{eq:5}
\end{align}
The prohibitive complexity,when $m$ and $M$ are large, comes from the marginalization operation in (\ref{eq:4}).



\emph{Graph-based detection (BP over fully-connected bipartite graph)}: The MAP detections in (\ref{eq:4}) is useful for turbo equalization~\cite{R27,R28}, where one can find a vast amount of literature showing the validity of iterative MIMO detection and channel decoding. Although turbo equalization is not our main focus in this paper, it is worthy of paying attention to the iterative detection as shown in~\cite{R18}, i.e., the BP over the fully-connected bipartite graph. In fact, the MAP detection in (\ref{eq:4}) can be regarded as a BP that is run over the singly connected factor graph as shown in Fig.~1(a), where each variable node, representing a data symbol, first passes \emph{a priori} information to the observation node labeled by the received vector, ${\pmb y}$. The observation node then provides each variable node with the corresponding \emph{a posteriori} likelihood by computing the marginalization in (\ref{eq:4}). Since the graph is singly connected and all variable nodes are connected via one observation node, the BP over this graph will surely converge, in one iteration, to the correct \emph{a posteriori} probability. The graph-based detection in~\cite{R18}, on the other hand, is a BP over the fully connected bipartite graph as shown in Fig.~1(b), where the marginalization is performed separately for each observation node and they are then combined to produce the belief and the extrinsic information on each data symbol. 
The algorithm in \cite{R18} can be summarized as follows. \newline

\textsf{BP 1 over the fully-connected factor graph~\cite{R18}}

\noindent \texttt{\small For given {\it a priori} probability of $x_j$, which is typically assumed to be uniformly 
\\distributed, i.e., $p(x_j) = 1/2^m$ for a modulation size of $2^m$}

$\texttt{(1)~Initialization:}$ 
\begin{align}
\lambda_{j \rightarrow i} (x_j) = p (x_j). \nonumber
\end{align}

$\texttt{(2)~Observation~node~computation:}$
\begin{align}
\pi_{i \rightarrow j} (x_j ) = A \cdot \sum_{{\pmb x} \backslash x_j \in \Xi ^{M-1}} p(y_i | x_1, x_2, \cdots, x_M) \prod_{k\neq j} \lambda_{k\rightarrow i} (x_k). \label{eq:9}
\end{align}

$\texttt{(3)~Belief~update:}$
\begin{align}
b(x_j) =  \prod _{k=1}^{M} \pi_{k \rightarrow j} (x_j). \label{eq:10}
\end{align}

$\texttt{(4)~Variable~node~computation:}$
\begin{align}
\lambda_{j \rightarrow i} (x_j) =  \prod_{k\neq j} \pi_{k \rightarrow j} (x_j) = \frac{b(x_j)} { \pi_{i \rightarrow j} (x_j)}. \label{eq:11}
\end{align} 
\noindent \texttt{\small The message update (\ref{eq:9})-(\ref{eq:11}) are repeated by a pre-defined number or until the \\belief does not change any more.} 

\indent Note that, in (\ref{eq:9}), $p({y_i}|{x_1},{x_2},..,{x_j} = x,...,{x_M})$ is given by
\begin{align}
 p(y_i| x_1,x_2,..,x_M) = \mathcal{CN} \left( y_i; \sum\nolimits_{j=1}^{M} h_{ij}x_j,\sigma ^2 \right) \nonumber 
\end{align}
and, by combining (\ref{eq:9}) and (\ref{eq:10}), we see that, at the first iteration
\begin{align}
 b({x_j}) \propto \prod\nolimits_{k = 1}^M p( x_j | y_k ) \nonumber 
\end{align}
which is certainly different from  $p(x_j | \pmb{y})$ in (\ref{eq:4}). That is, in BP algorithm, we first marginalize 
$p(\pmb{x} | y_k)$ for each received signal $y_k$ to obtain $p(x_j | y_k)$  and, then, the belief is obtained 
by their product, while, in MAP, we just marginalize $p( \pmb{x} | \pmb{y})$, once and for all. 
Note also that since the marginalization in (\ref{eq:9}) is performed over $M-1$ dimensional space and must be 
performed for the total number $2^m$ states of $x_j$, the complexity for one iteration is the same as that 
of MAP and the total complexity is multiplied by the number of iteration resulting in far complex computation 
than that of MAP detection. Regardless of its complexity, however, it provides a base structure for the 
development of low complexity detector.

$\emph{Complexity Reduction via Edge Pruning}$: To reduce the computational burden of the marginalization in (\ref{eq:9}) for non-Gaussian input, \cite{R18} proposed  pruning some edges of which the corresponding variable and observation nodes are weakly coupled together, e.g., those variable-observation node pairs with small values of $|h_{jk}|$. By using only (edge degree) $d_{\text f} < M$ edges per observation node (i.e., pruning $M-d_{\text f}$ edges), the complexity is reduced by a factor of $1/2^{m(M-d_{\text f})}$ relative to the ML/MAP or the BP 1 of complexity $\mathcal{O}(2^{mM})$. Here $d_{\text f}$ is the edge degree. The problem with this scheme is that $d_{\text f}$ must be large enough to ensure a reasonable performance, as shown in~\cite{R18}. 

\section{Detection Algorithm based on Pair-wise Graphical Models}
\label{Sec:III}
In this section, we develop low complexity iterative MIMO detection algorithms based on the pair-wise graphical models. We consider two types, namely, the fully-connected and the ring-type, and derive the corresponding BP algorithms that work for non-Gaussian input. As will be shown below, BP over the ring-type pair-wise graph is, with a slight difference, effectively equivalent to the one in~\cite{R10}.

\subsection{BP based on pair-wise Markov Random Field}
Our starting point is the BP algorithm based on pair-wise Markov random field (MRF) in~\cite{R19},~\cite{R20} and~\cite{R33}. MRF  is an undirected graph that describes local dependencies among a set of random variables. In MRF, the joint PDF of all random variables involved can be represented by a product of the joint PDF of each clique.\footnote{A clique in a graph is defined by a set of nodes having full-connection to each other.} The pair-wise MRF means that a joint PDF (of all variables involved) is represented by a product of joint PDFs with only two variables corresponding to an edge connecting any two neighbors. 
Let $V= \{1,2, \cdots,M\}$ be the set of nodes in the MRF corresponding to the random variables $x_1, x_2, \cdots, x_M$, respectively, and let $E$ be the set of all edges connecting these nodes. For a compact expression, we also denote the edge connecting nodes $j$ and $k$ as $e(j,k)$ and the set of neighbors of the $j$th node as $V(j)$. In pair-wise MRFs, the \emph{a posteriori} joint function $p(x_1, x_2, \cdots, x_M |{\pmb y} $) is modeled by a product of pair-wise potential functions~\cite{R17,R23}, e.g.,
\begin{align}
\hat{p} (x_1, x_2, \cdots, x_M |{\pmb y} ) = A \cdot \prod_{i \in V} \psi_i (x_i) \prod_{(i,j): e(i,j) \in E} \phi_{ij} (x_i, x_j), \label{eq:16}
\end{align}
where $\psi(x_i)$ is self-potential assigned to each node and $\phi(x_i, x_j)$ is the edge potential assigned to each edge. Such modeling based on a pair-wise MRF can also facilitate the marginalization to finally obtain the marginal distribution for each random variable.
Denoting the (incoming) message from the $i$th to the $j$th node as $\pi_{i \rightarrow j} (x_j)$, the BP through the pair-wise MRF can be described as \cite{R17}
\begin{align}
\pi_{i \rightarrow j} (x_j) = \alpha \sum_{x_i \in \Xi} \psi_i (x_i) \phi_{ij} (x_i, x_j) \prod_{k \in V(i) \backslash j} \pi_{k\rightarrow i} (x_i) \label{eq:29}
\end{align}
where, $\alpha$ is the normalizing constant, $V(i)\backslash j$ is the set of neighbors of node $i$ excluding node $j$. Note that we follow the convention in \cite{R17},\cite{R19}, and\cite{R20} to describe the message passing over a MRF, where only one type of node, say the variable nodes, exist and the message flies between these variable nodes. When we use a bipartite graph as shown in Fig.~1(b), we need to define two types of messages, i.e., one from variable node to observation node and the other from observation node to variable node, which can be easily obtained by dividing (\ref{eq:29}) into two separate steps, i.e., $\lambda_{i \rightarrow j} (x_i) = \prod_{k \in N(i)\backslash j} \pi_{k \rightarrow i} (x_i)$ (variable-to-observation node message) and $\pi_{i \rightarrow j} (x_j)  = \alpha \sum_{x_i \in \Xi} \psi_i (x_i) \phi_{ij} (x_i, x_j) \cdot \lambda_{i \rightarrow j} (x_i)$ (observation-to-variable node message).

In (\ref{eq:29}), the incoming messages are combined first to produce the extrinsic information, $\prod_{k\in V(i) \backslash j} \pi_{k\rightarrow i} (x_i)$, and they are then ``translated" by the potential function, $\psi_i (x_i) \phi_{ij} (x_i, x_j)$. The belief on the variable, $x_j$, is given by
\begin{align}
b(x_j) = \prod_{k \in V(i)} \pi_{k \rightarrow j} (x_j). \label{eq:30}
\end{align}

\noindent The potential functions in  (\ref{eq:16}) is given by a fatorization of the joint {\it a posteriori} probability. Specifically, in \cite{R17, R34}, the potential function is obtained by decomposition of multivariate Gaussian function, i.e.,  
\begin{align}
\begin{split}
&\phi_{i,j} (x_i, x_j) = A_{ij} \exp \left(- \frac{1}{\sigma^2} \text{Re} [x^*_i R_{ij} x_j ] \right)  \label{eq:a19} \\
&\psi_i (x_i) = A_i \exp \left(-\frac{1}{\sigma^2} \text{Re} [x_i^* y'_j - R_{ii} |x_i|^2 ]   \right) 
\end{split}
\end{align}
where $R_{ij} = {\pmb h}_i^H {\pmb h}_j$ and $y_j' = {\pmb h}_i^H {\pmb y}$, and $^*$ denotes complex conjugate. In fact, such decomposition gives us a fully connected pair-wise MRF and is exact in the sense that (\ref{eq:16}) with the functions in (\ref{eq:a19}) is exactly the same as the joint Gaussian PDF. It has been shown in~\cite{R16} and \cite{R17} that, with (\ref{eq:a19}), the BP over the fully connected pair-wise MRF results in the MMSE solution if it converges (though the convergence is not always guarateed for arbitrary channel matrices). Most of all, however, it does not work well for non-Gaussian input and the performance is shown to be inferior to the ML/MAP detector, especially for higher order modulation. 

\subsection{The proposed BP algorithm over pair-wise graphical models}

In this paper, we propose using the following message passing rule. 
\begin{align}
\label{eq:a18}
\pi_{i \rightarrow j} (x_j) = \alpha \sum_{x_i \in \Xi} \tilde{p} (x_j | x_i, {\pmb y}) \prod_{k\in V(i)\backslash j} \pi_{k \rightarrow i} (x_i).
\end{align}

\noindent where $\tilde{p} (x_j | x_i, {\pmb y})$ is the conditional {\it a posteriori} probability derived under a Gaussian input assumption to be discussed shortly. Comparing with (\ref{eq:29}), the potential function in (\ref{eq:29}) is replaced with $\tilde{p} (x_j | x_i, {\pmb y})$. Note, however, that it is not a factor of the {\it a posteriori} probability in (\ref{eq:1}), unlike those in (\ref{eq:a19}).

The trick here is to use $\tilde{p} (x_j | x_i, {\pmb y})$ obtained under Gaussian input assumption in order to approximate the marginal PDF of non-Gaussian data. Note also that although the translation function $\tilde{p} (x_j | x_i, {\pmb y})$ is obtained under the Gaussian assumption on the data symbol, the message itself, $\pi_{i \rightarrow j} (x_j)$, will not be treated as Gaussian. The rationale of using $\tilde{p}(x_j|x_i, {\pmb y})$ is to reduce the computational complexity. Let ${p}(x_j|x_i, {\pmb y})$ be the true conditional {\it a posteriori} probability without the Gaussian assumption. Further assume that, after many iterations, the extrinsic information $\prod_{k \in V(i)\backslash j} \pi_{k\rightarrow i} (x_i)$ for the $i$th node (a neighbor of the $j$th node) converges to its true \emph{a posteriori} marginal distribution, $p(x_i| {\pmb y})$. Then, with an appropriate normalizing constant, we also have $\pi_{i \rightarrow j} (x_j) \rightarrow p(x_j | {\pmb y})$ for the $j$th node, which means, once converged, this translation function ensures that the final belief is given by the true marginal \emph{a posteriori} distribution. This is actually a non-sense since, before we run the algorithm, we need first to compute ${p}(x_j|x_i, {\pmb y})$, which, however, has a complexity of ML detection. Hence, at this step, we assume $x_j$s are all Gaussian to obtain $\tilde{p} (x_j | x_i, {\pmb y})$, of which the computation is much simpler as to be discussed shortly. 
It is a simple trick to use  $\tilde{p} (x_j | x_i, {\pmb y})$ obtained under Gaussian input assumption to approximate the true posterior marginal for non-Gaussian input (i.e., ${p}(x_j|{\pmb y})$). 

On the other hand, the conditional PDF, $\tilde{p} (x_j | x_i, {\pmb y})$, under Gaussian input assumption can be easily obtained from the following simple probability relations, i.e.,
\begin{align}
 \tilde{p}(x_i ,x_j |{\pmb{y}})\tilde{p}({\pmb{y}}) =  \tilde{p}({\pmb{y}}|x_i ,x_j )\tilde{p}(x_i ,x_j ) = \tilde{p}(x_j |x_i ,{\pmb{y}})\tilde{p}(x_i ,{\pmb{y}}) \nonumber
\end{align}
resulting in
\begin{align}
\tilde{p}(x_j \left| {x_i ,{\pmb{y}}} \right.) = \frac{{\tilde{p}(x_i ,x_j |{\pmb{y}} )}}{{\tilde{p}(x_i | {\pmb{y}})}}  = \frac{{\tilde{p}({\pmb{y}}|x_i ,x_j )\tilde{p}(x_j )}}{{\tilde{p}({\pmb{y}}|x_i )}} \label{eq32}
\end{align}
\noindent where
\begin{align}
\tilde{p}({\pmb{y}}|x_i ,x_j ) &= \mathcal{CN}\left( {{\pmb{y}};{\rm{ }}{\pmb{h}}_i x_i  + {\pmb{h}}_j x_j ,{\pmb{K}}_{\{ j,i\} } } \right) \label{eq33} \\
\tilde{p}({\pmb{y}}|x_i ) &= \mathcal{CN}\left( {{\pmb{y}};{\pmb{h}}_i x_i ,{\pmb{K}}_{\{ i\} } } \right)  \label{eq34} \\
\tilde{p}(x_i ) &= \mathcal{CN} \left( x_i; 0, 1\right)  \nonumber
\end{align}
with
\begin{align}
{\pmb{K}}_\Phi ^{}  &= \sigma ^2 {\pmb{I}} + \sum\nolimits_{k \notin \Phi }^{} {{\pmb{h}}_k^{} {\pmb{h}}_k^H } \label{eq35}
\end{align}
for $\Phi = \{\emph{i},\emph{j}\}$ or \{\emph{i}\}. In the second equality in (\ref{eq32}), we used the independence assumptions on $x_j$'s.

Moreover, the Gaussian input assumption leads us to a much simpler form.
First, define the conditional MMSE estimator for $x_j$ given $x_i$,
\begin{align}
{\pmb{c}}_{j|i}^{}  = {\pmb{K}}_{\{ j,i\} }^{ - 1} {\pmb{h}}_j \label{eq38}
\end{align}
and $y'_{j|i}  = {\pmb{c}}_{j|i}^H \pmb{y}$ such that
\begin{align}
y'_{j|i}  = {\pmb{c}}_{j|i}^H {\pmb{y}} = a_{j|i,j} x_j  + a_{j|i,i} x_i  + n'_{j|i} \label{eq39}
\end{align}
where
\begin{align}
a_{j|i,k}  &= {\pmb{c}}_{j|i}^H {\pmb{h}}_k  = {\pmb{h}}_j^H {\pmb{K}}_{\{ j,i\} }^{ - 1} {\pmb{h}}_k^{}\hspace{8pt} \textrm{for}\; k = i\; \textrm{or}\; j \label{eq40} \\
\mathbb{E}|{n'_{j|i}}{|^2} &= {\pmb{c}}_{j|i}^H{{\pmb{K}}_{\{ j,i\} }}{\pmb{c}}_{j|i}^{} = {\pmb{h}}_j^H{\pmb{K}}_{\{ j,i\} }^{ - 1}{\pmb{h}}_j^{} \equiv \sigma _{j|i}^2. \label{eq41}
\end{align}
Then, (\ref{eq32}) can be rewritten as
\begin{align}
\tilde{p}(x_j |x_i ,y'_{j|i} ) = \frac{{\tilde{p}(y'_{j|i} \left| {x_i ,x_j } \right.)\tilde{p}(x_j )}}{{\tilde{p}(y'_{j|i} \left| {x_i } \right.)}} \label{eq42}
\end{align}
with
\begin{align}
\tilde{p}(y'_{j|i} \left| {x_i ,x_j } \right.) &= \mathcal{CN}(y'_{j|i} ;a_{j|i,j} x_j  + a_{j|i,i} x_i ,\sigma _{j|i}^2 ) \label{eq43} \\
\tilde{p}(y'_{j|i} \left| {x_i } \right.) &= \mathcal{CN}(y'_{j|i};a_{j|i,i} x_i ,\sigma _{j|i}^2  + |a_{j|i,j} |^2 ). \label{eq44}
\end{align}
In (\ref{eq44}), we used $\emph{p}(\emph{x}_j ) = \mathcal{CN}(\emph{x}_j ;0,1)$.
Plugging (\ref{eq43}) and (\ref{eq44}) into (\ref{eq42}) and by replacing $p({x_j})$ with $\mathcal{CN}(x_j ;0,1)$, 
we have the simplified translation function from the derivation in the appendix.
\begin{align}
&\tilde{p}(x_j \left| {x_i ,y'_{j|i} } \right.) = \mathcal{CN}\left( {x_j ;\frac{{a_{j|i,j}^* }}{{\sigma _{j|i}^2  + |a_{j|i,j} |^2 }}\left( {y'_{j|i}  - a_{j|i,i} x_i } \right),\frac{{\sigma _{j|i}^2 }}{{\sigma _{j|i}^2  + |a_{j|i,j} |^2 }}} \right)\nonumber \\ 
& = \mathcal{CN}\left( {x_j ;\frac{1}{{1 + \sigma _{j|i}^2 }}\left( {y'_{j|i}  - a_{j|i,i} x_i } \right),\frac{1}{{1 + \sigma _{j|i}^2 }}} \right) \label{eq45}
\end{align}
where, in the last line, we used the fact that $\emph{a}_{j|i,j}$ is real valued and is equal to $\sigma^{2}_{j|i}$. 
Note that in (\ref{eq45}), the mean is the conditional MMSE estimate of $\emph{x}_{j}$ given $\emph{x}_{i}$.

Using (\ref{eq38}) to (\ref{eq45}), the proposed message passing rule can be summarized as follows.
\newline
\newline
\noindent \textsf{BP 2 over the fully-connected pair-wise graph}

\noindent\texttt{\small Given the messages in the previous iteration, $\pi_{k \rightarrow i}(x_{i})$}, 

\texttt{{\small (1) Compute the extrinsic information for all pairs $(i,j)$ with $i\neq j$ }}
\begin{align}
\lambda _{i \to j} (x_i ) = \prod\limits_{k \in V(i)\backslash j} {\pi _{k \to i} (x_i )}. \label{eqa46}
\end{align}

\texttt{{\small (2) Translate the message $\lambda _{i \to j} (x_j )$ to $\pi_{i \to j} (x_j )$ }}
\begin{align}
\pi _{i \to j} (x_j ) = \alpha \sum\limits_{x_i  \in \Xi } {\tilde{p}(x_j | x_i, {y'_{j|i}} )} \cdot \lambda _{i \to j} (x_i ). \label{eq46}
\end{align}
\texttt{{\small with ${\tilde{p}(x_j | x_i, {y'_{j|i}} )}$ given by (\ref{eq45}). The above message passing is computed for all edges 
in both directions, and they are repeated by a pre-defined number or until the \\messages do not change any more. The belief is finally obtained the same as that in (\ref{eq:30}). }}

Note that the above algorithm uses two types of message and can be efficiently described by a message passing over a bipartite graph in Fig.~\ref{Fig:4}(a), where the observations used for the message translation from the \emph{j}th node to the \emph{i}th and its reverse is clearly
denoted by $y'_{j|i}$ and $y'_{i|j}$, respectively.
It is also interesting to note that the above algorithm is similar to the algorithms in \cite{R9} and \cite{R10} with 
two differences. One is in the underlying structure and the other in message translation. To clarify the similarity 
and difference, we consider the ring-type bipartite graph shown in Fig. 2(b). In this ring-type graph, each node has only two 
neighbors and, hence, in the computation of extrinsic information, the incoming message from one neighbor is simply 
passed to the other and the detection algorithm can be described more concisely and clearly as follows 
(even though it can be generally applicable to any pair-wise graphical model). \newline
\newline
\noindent \textsf{BP 3 over the ring-type pair-wise graph (Forward-backward recursion)}

\noindent\texttt{\small Given the messages in the previous iteration, $\pi_{k \rightarrow i}(x_{i})$},
 
\noindent \texttt{{\small (1) Variable node to observation node message}}
\begin{align}
\lambda _{j \to (j \pm 1)_M} (x_{j} ) = \pi _{(j \mp 1)_M \to j} (x_j )  \; \; \; \; \; \; \forall j. \label{eq47a}
\end{align}

\noindent  \texttt{\small (2) Observation node to variable node message}
\begin{align}
\pi _{j \to (j \pm 1)_M} (x_{(j \pm 1)_M} ) = \sum \limits_{x_j  \in \Xi } {\tilde{p}(x_{(j \pm 1)_M} |x_j ,y'_{(j \pm 1)_M\; |j} ) \cdot \lambda _{j \to (j \pm 1)_M} (x_{j} ) } \; \; \; \; \; \; \forall j. \label{eq47}
\end{align}

\noindent \texttt{{\small with ${\tilde{p}(x_j | x_i, {y'_{j|i}} )}$ given by (\ref{eq45}). After a pre-defined number of iterations, the belief is finally obtained by}}
\begin{align}
b(x_j ) = \pi _{(j + 1)_M \to j} (x_j ) \cdot \pi _{(j - 1)_M \to j} (x_j ). \label{eq48}
\end{align}
\noindent From (\ref{eq47a}) to (\ref{eq48}), $(\cdot)_M$ denotes the 1-base modulo-$M$ operation such that $(M+1)_M = 1$ and $(0)_M = M$. Later on, however, we will omit this for notational simplicity.

On the other hand, this message update rule is a forward-backward algorithm similar to those in \cite{R9}, i.e., 
the message from the $(j-1)$th node to the $j$th node corresponds to the forward message, and the one from 
the $(j+1)$th node to the $j$th node corresponds to the backward message. The difference is in the message translation. 
In (\ref{eq47}), the message translation from the $j$th node to the $i$th and its reverse utilize a different 
translation function, i.e.,
\begin{align}
y'_{j|i}  \ne y'_{i|j}  \Rightarrow p(x_j |x_i ,y'_{j|i} ) = \frac{{p(y'_{j|i} \left| {x_i ,x_j } \right.)p(x_j )}}{{p(y'_{j|i} \left| {x_i } \right.)}} \ne p(x_i |x_j ,y'_{i|j} ). \nonumber
\end{align}
This means the branch metrics used for the forward and backward recursion are separately optimized to maximize their conditional SINR,
as also proposed in \cite{R10}. The translation function is also different from the branch metric in \cite{R10}, i.e., the mean
and variance in (\ref{eq45}) have a scaling factor of $a_{j|i,j}^* /(\sigma _{j|i}^2  + |a_{j|i,j} |^2 )$ and
$1/(\sigma _{j|i}^2  + |a_{j|i,j} |^2 )$, respectively, instead of $a_{j|i,j}^* /|a_{j|i,j} |^2$ and $1/|a_{j|i,j} |^2$, though
it has a minor impact on the error rate performances. The bipartite graphs corresponding to this algorithm is shown in Fig.~\ref{Fig:4}(b),
where the observation used for the message translation from the \emph{j}th node to the \emph{i}th and its reverse is clearly
denoted by $y'_{j|i}$ and $y'_{i|j}$, respectively.
Note that, for ring-type graph, we obtain different performance with a different antenna permutation, as also noted in \cite{R7},
while, in the fully-connected one, we do not need antenna permutation, which is one possible advantage of the latter
to the former.

Since the graphical models have short cycle(s) (especially the fully-connected pair-wise graph), it is quite questionable whether or not BP 2 and 3 will converge. In the literature, it was known that the convergence of BP over a loopy graph is not guaranteed, even though it does converge
in most practical cases. Since the convergence proof for non-Gaussian input is not tractable, we will tackle this question
in the next section by modifying them for Gaussian input.

\subsection{Complexity}

For complexity comparisons, we need to consider both the linear preprocessing and the post iterative detections.
Consider first the computational complexity of the post iterative detection only. In the MAP detector, the distance metric $| {\pmb y} - {\pmb{Hx}}|^2$ is computed first for all combinations of $(x_1, x_2, \cdots, x_M) \in \Xi^M$ and, then, the marginalization in (\ref{eq:4}) is performed over all combinations of ${\pmb x} \backslash x_j \in \Xi^{M-1}$ for each of $2^m$ alphabet, resulting in a complexity of $\mathcal{O}(M^2\cdot 2^{mM})$. Comparing with the complexity of MAP detector, the computational burden in the BP2 for the fully-connected pair-wise graph in Fig.~\ref{Fig:4}(a) for $\nu$ iterations is $\mathcal{O}(\nu \cdot M(M-1) \cdot 2^{2m})$ since the marginalization for each $M$ node is performed separately for its ($M-1$)  neighbors and repeated $\nu$ times. Although some additional computation is required for the linear processing in (\ref{eq35})-(\ref{eq41}), it is typically much smaller than $2^{m(M-1)}$, resulting in considerable computational reduction, which certainly comes from modeling through the pair-wise graphical model. On the other hand, the computational complexity for the ring-type pair-wise graph in Fig.~\ref{Fig:4}(b) is $\mathcal{O}(\nu\cdot M\cdot 2^{2m})$, which is even less than that of the fully-connected one.

To evaluate approximate number of operations, we assume:

\begin{enumerate}
\item The marginalization in (\ref{eq:4}) for the MAP and the computation in (\ref{eq46}) and (\ref{eq47}) for the BP 2 and 3, respectively, are performed in log-domain, where multiplications and additions in these equations are replaced with addition and max-operation, respectively, and, in (\ref{eq:3}) and (\ref{eq45}), we only need to compute its exponent.  
\item A multiplication of a $(p\times q)$ matrix with a $(q\times r)$ matrix requires $pqr$ times of multiplications and additions (of complex numbers).
\item An inversion of a $(p\times p)$ square matrix approximately requires $2p^3-2p^2$ times of additions, $2p^3-p^2$ times of multiplications and $p^2$ times of divisions (of complex numbers).
\item Division of complex numbers requires one complex multiplication and two real divisions. 
\item A complex addition requires two real additions and complex multiplication requires four real multiplications and two real additions.
\item Real addition and multiplication are assumed to have the same complexity of one (operation), while real division to have $8$ (operations).
\end{enumerate}
With these assumptions, we can count the number of operations required to generate the symbol likelihoods, i.e., the \emph{a posteriori} likelihood in (\ref{eq:4}) for the MAP and the final beliefs in the BP2 and BP3. We do not count the generation of LLR for each bit from the symbol likelihood since it is the same for all detectors. The results are summarized in Table.1, where we also show two examples, one with $M=6$, $m=2$, $\nu_1=4$, $\nu_2=6$ and the other with $M=4$, $m=4$, $\nu_1=4$, $\nu_2=6$, where $\nu_1$ and $\nu_2$ are the number of iterations for the BP2 and BP3, respectively.

It will be interesting to compare the complexity of the proposed schemes with the one in~\cite{R34} (Table I). As analyzed for BP2 and BP3, the complexity can be considered separately for the preprocessing and the post decoding. For the latter, the complexity of the one in \cite{R34} should be the same as that of the BP2,
though it would be more complex than that of the BP3 in our proposal. The main difference is in the preprocessing stage. Certainly, the complexity of the preprocessing in \cite{R34} is much less than that of the proposed preprocessing since it consists of only two matrix multiplications, i.e., ${\pmb H}^H{\pmb H}$ and ${\pmb H^H}{\pmb  r}$, which requires $M^3+M^2$ of complex multiplications and the same number of complex additions.

\section{Message Passing with Gaussian Input}

In Section III, we developed BP algorithms run over the pair-wise bipartite graphs for non-Gaussian messages. The Gaussian assumption on $x_j$'s was employed first to obtain thetranslation function in (\ref{eq45}). While, we used the exact marginalization in the message translation step. In this section, we further simplify the message passing rule by extending the Gaussian assumption to the message translation step, as was done in~\cite{R16, R17, R19}, to obtain the Gaussian BP over the two graphical models under consideration. 

\emph{ML detection with Gaussian input:} With independent and identically distributed Gaussian input, $p({\pmb x}) = \prod_{j=1}^M \mathcal{CN}(x_j; 0, 1)$, the MAP detector in (\ref{eq:4}) becomes 
\begin{align}
\begin{split}
p(x_j| {\pmb y}) &= A \cdot \int  \cdots \int \mathcal{CN} ({\pmb y}; {\pmb H}{\pmb x}, \sigma^2) \prod_{k \neq j} \mathcal{CN}(x; 0,1)  d{\pmb x}\backslash x_j =\mathcal{CN} (x_j; {\pmb h}^H_j {\pmb K}^{-1} {\pmb y}, 1-{\pmb h}_j^H {\pmb K}^{-1} {\pmb h}_j) \label{eq:52}
\end{split}
\end{align}
where we appropriately select a normalization constant $A$, while the covariance matrix ${\pmb K}$, is given by ${\pmb K} = ({\pmb H} {\pmb H}^H + \sigma^2 {\pmb I})$. Noting that, in (\ref{eq:52}), the mean is the linear MMSE estimates of $x_j$ and the variance is the corresponding minimum MSE, i.e., 
\begin{align}
&\hat{x}_j = {\pmb h}_j^H {\pmb K}^{-1} {\pmb y}  \label{eq:54} \\
&\text{MMSE}_j = 1- {\pmb h}_j^H {\pmb K}^{-1} {\pmb h}_j. \label{eq:55}
\end{align}
This means that linear MMSE estimation is optimum for the Gaussian input, while it does not hold for non-Gaussian input.

\subsection{Gaussian BP over the proposed pair-wise graphs}
Assuming that $x_j$'s are Gaussian and the distributions $\pi_{i \rightarrow j} (x_j)$, and $b(x_i)$ are all Gaussian PDFs, they can be characterized by their mean and variance only. This means the messages $\pi_{i \rightarrow j} (x_j)$ and the belief, $b(x_i)$, in the BP 2 and 3 can be replaced with the update rule for the mean and variance pair. Since the Gaussian BP corresponding to the BP 1 over the fully connected pair-wise graph in (\ref{eq:9})-(\ref{eq:11}) has already been discussed in~\cite{R16}, we consider here only the BP 2 and 3 over the two pair-wise graphical models. 

Let us denote the mean and the variance pair of the complex Gaussian PDFs, $\pi_{i \rightarrow j} (x_j)$, and $b(x_i)$ as $(\mu_{\pi, i\rightarrow j}, \sigma^2_{\pi, i\rightarrow j})$ and $(\mu_{i}, \sigma^2_{i})$. Then, the BP 2 and 3 under the Gaussian input assumption can be rewritten as follows (Detailed derivations are shown in the appendix): \newline
\newline
\noindent \textsf{Gaussian BP 2G over the fully-connected pair-wise graph}

\noindent \texttt{\small
Given the messages in the previous iteration (or the initial messages),
$\pi_{i\rightarrow j}(x_i)=(\mu_{\pi,i\rightarrow j}, \sigma^{2}_{\pi,i\rightarrow j})$ $\;\forall (i,j): i \ne j$, 
they are recursively updated by}
\begin{align}
\sigma _{\pi ,i \to j}^2  &= \frac{1}{{1 + \sigma _{j|i}^2 }} + \frac{{|a_{j|i,i} |^2 }}{{(1 + \sigma _{j|i}^2 )^2 }} \cdot \sum\nolimits_{k \in V(i)\backslash j} {\sigma _{\pi ,k \to i}^{ - 2} } \label{eq58} \\
\mu _{\pi ,i \to j}  &= \frac{{y'_{j|i} }}{{1 + \sigma _{j|i}^2 }} - \frac{{a_{j|i,i} }}{{1 + \sigma _{j|i}^2 }} \cdot \frac{{\sum\nolimits_{k \in V(i)\backslash j} {\sigma _{\pi ,k \to i}^{ - 2} \mu _{\pi ,k \to i}^{} } }}{{\sum\nolimits_{k \in V(i)\backslash j} {\sigma _{\pi ,k \to i}^{ - 2} } }}. \label{eq59}
\end{align}
\texttt{\small After a number of iterations of the above, the final belief on $\emph{x}_{i}$ is obtained by}
\begin{align}
\sigma _i^{ - 2}  &= \sum\nolimits_{k \in V(i)} {\sigma _{\pi ,k \to i}^{ - 2} }  \label{eq60} \\
\mu _i  &= \frac{{\sum\nolimits_{k \in V(i)} {\sigma _{\pi ,k \to i}^{ - 2} \mu _{\pi ,k \to i}^{} } }}{{\sum\nolimits_{k \in V(i)} {\sigma _{\pi ,k \to i}^{ - 2} } }}. \label{eq61}
\end{align}

\noindent \textsf{Gaussian BP 3G over the ring-type MRF (Gaussian forward-backward recursion)}

\noindent\texttt{\small Given the messages in the previous iteration, 
$\pi_{i \rightarrow i \pm 1}(\emph{x}_{i})=(\mu_{\pi,i\rightarrow i \pm1},\sigma^{2}_{\pi,i\rightarrow i \pm 1})$ $\;\forall i$,
they are recursively updated by}
\begin{align}
\sigma _{\pi ,i \to i \pm 1}^2  &= \frac{1}{{1 + \sigma _{i \pm 1|i}^2 }} + \frac{{|a_{i \pm 1|i,i} |^2 }}{{(1 + \sigma _{i \pm 1|i}^2 )^2 }} \cdot \sigma _{\pi ,i \mp 1 \to i}^{ - 2} \label{eq64} \\
\mu _{\pi ,i \to i \pm 1}  &= \frac{1}{{1 + \sigma _{i \pm 1|i}^2 }}y'_{i \pm 1|i}  - \frac{{a_{i \pm 1|i,i} }}{{1 + \sigma _{i \pm 1|i}^2 }} \cdot \mu _{\pi ,i \mp 1 \to i}^{}. \label{eq65}
\end{align}
\texttt{\small After a number of iterations of the above, the final belief on $\emph{x}_{i}$ is obtained by}
\begin{align}
\sigma _i^{ - 2}  &= \sigma _{\pi ,i + 1 \to i}^{ - 2}  + \sigma _{\pi ,i - 1 \to i}^{ - 2}  \label{eq66} \\
\mu _i  &= \frac{{\sigma _{\pi ,i + 1 \to i}^{ - 2} \mu _{\pi ,i + 1 \to i}^{}  + \sigma _{\pi ,i - 1 \to i}^{ - 2} \mu _{\pi ,i + 1 \to i}^{} }}{{\sigma _{\pi ,i + 1 \to i}^{ - 2}  + \sigma _{\pi ,i - 1 \to i}^{ - 2} }}.  \label{eq67}
\end{align}
\newline
Particularly, in the Gaussian BP 3G, we observe the following:

\begin{enumerate}
  \item The variance and mean are updated separately (except in the final belief).
  \item In (\ref{eq64}) and (\ref{eq65}), there are two separate message flows; one is the forward from \emph{i} to \emph{i}+1 and the other is the backward from \emph{i} to \emph{i}-1.
  \item Eq. (\ref{eq65}) can be rewritten as
\begin{align}
\textrm{Forward recursion:}\; \mu _{\pi ,i \to i + 1}  = F_i  \circ \mu _{\pi ,i - 1 \to i}^{} \label{eq68} \\
\textrm{Backward recursion:}\; \mu _{\pi ,i \to i - 1}  = B_i  \circ \mu _{\pi ,i + 1 \to i}^{} \label{eq69}
\end{align}
where the operations, $\emph{F}_{i}$ and $\emph{B}_{i}$, are first order elementary function defined as
\begin{align}
F_i  \circ \mu  \equiv u_{i + 1,i}  + v_{i + 1,i}  \cdot \mu \label{eq70} \\
B_i  \circ \mu  \equiv u_{i - 1,i}  + v_{i - 1,i}  \cdot \mu \label{eq71}
\end{align}
with
\begin{align}
u_{j,i}&=\frac{{y'_{j|i} }}{{1 + \sigma _{j|i}^2 }} = {\rm{  }}\frac{{{\pmb{h}}_j^H {\pmb{K}}_{\{ j,i\} }^{ - 1} {\pmb{y}}}}{{1 + {\pmb{h}}_j^H {\pmb{K}}_{\{ j,i\} }^{ - 1} {\pmb{h}}_j^{} }}{\rm{ }} = {\pmb{h}}_j^H {\pmb{K}}_{\{ i\} }^{ - 1} {\pmb{y}} \label{eq72}
\end{align}
\begin{align}
v_{j,i}&=\frac{{-a_{j|i,i} }}{{1 + \sigma _{j|i}^2 }} =  \frac{{{-\pmb{h}}_j^H {\pmb{K}}_{\{ j,i\} }^{ - 1} {\pmb{h}}_i^{} }}{{1 + {\pmb{h}}_j^H {\pmb{K}}_{\{ j,i\} }^{ - 1} {\pmb{h}}_j^{} }} =  - {\pmb{h}}_j^H {\pmb{K}}_{\{ i\} }^{ - 1} {\pmb{h}}_i^{} .\label{eq73}
\end{align}
Here, we used (\ref{eq38})-(\ref{eq41}) and, in the last, the matrix inversion lemma
\begin{align}
({\pmb{A}} + {\pmb{BB}}^H )^{ - 1}  = {\pmb{A}}^{ - 1}  - {\pmb{A}}^{ - 1} {\pmb{B}}({\pmb{I}} + {\pmb{B}}^H {\pmb{A}}^{ - 1} {\pmb{B}})^{ - 1} {\pmb{B}}^H {\pmb{A}}^{ - 1}. \nonumber
\end{align}
  \item Similar to the means, (\ref{eq64}) can also be rewritten as
\begin{align}
\textrm{Forward recursion:}\; \sigma _{\pi ,i \to i + 1}^2  = F'_i  \circ \sigma _{\pi ,i - 1 \to i}^2  \label{eq74} \\
\textrm{Backward recursion:}\; \sigma _{\pi ,i \to i - 1}^2  = B'_i  \circ \sigma _{\pi ,i + 1 \to i}^2 \label{eq75}
\end{align}
where
\begin{align}
{F'_i} \circ \mu  \equiv {u'_{i + 1,i}} + {v'_{i + 1,i}} \cdot \mu  \label{eq76} \\
{B'_i} \circ \mu  \equiv {u'_{i - 1,i}} + {v'_{i - 1,i}} \cdot \mu  \label{eq77}
\end{align}
with
\begin{align}
u'_{j,i}  &= \frac{1}{{1 + \sigma _{j|i}^2 }} = {\rm{  }}\frac{1}{{1 + {\pmb{h}}_j^H {\pmb{K}}_{\{ j,i\} }^{ - 1} {\pmb{h}}_j^{} }}  \label{eq78} \\
v'_{j,i}  &= \frac{{|a_{j|i,i} |^2 }}{{(1 + \sigma _{j|i}^2 )^2 }} = \left| {\frac{{{\pmb{h}}_j^H {\pmb{K}}_{\{ j,i\} }^{ - 1} {\pmb{h}}_i^{} }}{{1 + {\pmb{h}}_j^H {\pmb{K}}_{\{ j,i\} }^{ - 1} {\pmb{h}}_j^{} }}} \right|^2  = \left| {{\pmb{h}}_j^H {\pmb{K}}_{\{ i\} }^{ - 1} {\pmb{h}}_i^{} } \right|^2.  \label{eq79} 
\end{align}
\end{enumerate}

\subsection{Convergence of Gaussian BP}
Regarding the convergence of Gaussian BP, it was previously shown in~\cite{R24} that Gaussian BP for arbitrary topology converges to the correct mean (see also~\cite{R29}). It was shown in~\cite{R16} that the Gaussian BP over the factor graph in Fig.~1(b) converges to the linear MMSE solution, even though its convergence is not assured. Based on these findings, we can conjecture that, for both the Gaussian BP of rules 2G and 3G, the mean converges to the linear MMSE solution, as also verified by simulations in the next section. One way to prove the convergence would be to use the idea of the ``unwrapped tree" presented in~\cite{R24}. In our case, however, this would be a tedious derivation. Therefore, we try an alternative approach that works for GBP 3G, but not for GBP 2G. Note, however, that the derivation here differs from~\cite{R16, R17} in the underlying graphical model and the translation function used. The objective in this subsection is to prove the following theorem.

\begin{theorem}
In the Gaussian BP 3G over the ring-type pair-wise graph, the mean converges to the linear MMSE estimate (\ref{eq:54}) for non-zero noise power as the number of iterations approaches infinity.
\end{theorem}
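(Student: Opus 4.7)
The plan is to exploit two structural features of GBP 3G on the ring. First, the mean recursion (\ref{eq65}) is completely decoupled from the variance recursion (\ref{eq64}), since the coefficients $u_{j,i}$ and $v_{j,i}$ depend only on $\bH$, $\sigma^2$, and $\by$. Second, on a ring of $M$ nodes, composing the forward recursion $M$ times returns to the starting edge and produces a single scalar self-map
\[
\mu^{(n+M)}_{\pi,i\to i+1} \;=\; A_i \;+\; B\,\mu^{(n)}_{\pi,i\to i+1}, \qquad B=\prod_{k=1}^{M} v_{k+1,k},
\]
whose slope $B$ is independent of the starting edge; the backward direction yields an analogous map with slope $\tilde B$. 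Once $|B|,|\tilde B|<1$ are in hand, Banach's contraction principle gives unique fixed points $\mu^{f*}_{i\to i+1}$ and $\mu^{b*}_{i\to i-1}$ from the scalar equation $(1-B)\mu^{f*}_{i\to i+1}=A_i$ (and its backward counterpart), and the belief in (\ref{eq67}) is thereby pinned down.

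The crux is therefore the contraction $|B|<1$. Using the matrix-inversion-lemma simplification in (\ref{eq73}), I would rewrite $v_{j,i}=-\bh_j^H\bK_{\{i\}}^{-1}\bh_i$, and then exploit the fact that the variance recursion (\ref{eq74})--(\ref{eq79}) is affine in $\sigma^2_{\pi,i-1\to i}$ with the \emph{same} one-cycle multiplicative constant $|B|^2=\prod_k v'_{k+1,k}$. Because each intercept $u'_{k+1,k}=1/(1+\sigma^2_{k+1|k})$ is strictly positive whenever $\sigma^2>0$, any bounded nonnegative fixed-point variance for that map already forces $|B|^2<1$, so it suffices to show that the variance iterates remain bounded. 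For this I would appeal to the Cauchy--Schwarz bound in the $\bK_{\{k\}}^{-1}$ inner product,
\[
|v_{k+1,k}|^2 \;\le\; \bigl(\bh_{k+1}^H\bK_{\{k\}}^{-1}\bh_{k+1}\bigr)\bigl(\bh_k^H\bK_{\{k\}}^{-1}\bh_k\bigr),
\]
combined with the identity $\bh_k^H\bK_{\{k\}}^{-1}\bh_k=\alpha_k/(1-\alpha_k)$ where $\alpha_k=\bh_k^H\bK^{-1}\bh_k\in(0,1)$, the strict inequality $\alpha_k<1$ following from positivity of the MMSE in (\ref{eq:55}) at $\sigma^2>0$.

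Having secured convergence, the remaining step is to identify the limit. I would substitute the closed-form fixed points $\mu^{f*}_{i\to i+1}$ and $\mu^{b*}_{i\to i-1}$ into the belief combiner (\ref{eq67}), and then apply the matrix inversion lemma (as already used to pass from (\ref{eq72}) to its simplified form) to collapse products and sums of $\bK_{\{i\}}^{-1}$ back into a single $\bK^{-1}$, verifying that the resulting mean is $\bh_i^H\bK^{-1}\by$, precisely the LMMSE estimate (\ref{eq:54}). The primary obstacle I anticipate is not this final algebraic identification but the contraction bound itself: sidestepping the generic unwrapped-tree argument of \cite{R24} forces one to find a ring-specific handle, and the positivity of the fixed-point variance -- together with the observation that the variance and mean recursions share the same multiplicative factor -- appears to be the cleanest lever available here.
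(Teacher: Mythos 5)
Your overall skeleton coincides with the paper's: the mean recursion is affine and decoupled from the variance recursion, one full turn of the ring gives a scalar affine self-map whose slope is the cyclic product $\prod_j v_{j,j-1}$ (the paper's $f_{1,V}$, $b_{1,V}$ in (\ref{eq:84})), contraction gives a unique fixed point, and the fixed point is identified with the LMMSE estimate (\ref{eq:54}). The genuine gap is precisely at the step you yourself call the crux, $|B|<1$. Your Cauchy--Schwarz pairing gives $|v_{k+1,k}|^2\le\bigl({\pmb h}_{k+1}^H{\pmb K}_{\{k\}}^{-1}{\pmb h}_{k+1}\bigr)\bigl({\pmb h}_{k}^H{\pmb K}_{\{k\}}^{-1}{\pmb h}_{k}\bigr)$; the first factor is indeed less than one (because ${\pmb K}_{\{k\}}$ still contains ${\pmb h}_{k+1}{\pmb h}_{k+1}^H$), but the second factor is ${\pmb h}_{k}^H{\pmb K}_{\{k\}}^{-1}{\pmb h}_{k}=\alpha_k/(1-\alpha_k)$, which your identity shows is finite but in no way bounded by one: for near-orthogonal columns it behaves like $\|{\pmb h}_k\|^2/\sigma^2$ and blows up at high SNR, so the product of your bounds can be far larger than one even when the true $|B|$ is nearly zero. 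Hence the bound does not certify contraction. The fallback ``it suffices to show the variance iterates remain bounded'' is circular: the one-cycle variance map has slope exactly $|B|^2$ and a strictly positive intercept, so its iterates are bounded if and only if $|B|^2<1$; boundedness is not an independent handle, and the Cauchy--Schwarz estimate plus $\alpha_k<1$ does not supply it.

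What the paper does at this point (its Lemma on $|f_{i,V}|<1$) is different in a way that matters: using (\ref{eq73}) it writes each factor as $v_{j,j-1}=-{\pmb h}_j^H{\pmb K}_{\{j-1\}}^{-1}{\pmb h}_{j-1}$, expresses the cyclic product as a trace of a chain of matrices ${\pmb h}_j{\pmb h}_j^H{\pmb K}_{\{j-1\}}^{-1}$, applies $\mathrm{tr}({\pmb A}{\pmb B})\le\mathrm{tr}({\pmb A})\mathrm{tr}({\pmb B})$ to replace the mixed terms by the \emph{matched} quadratic forms ${\pmb h}_j^H{\pmb K}_{\{j-1\}}^{-1}{\pmb h}_j$, and then uses the matrix inversion lemma to show each matched form equals $\sigma^2_{j|j-1}/(1+\sigma^2_{j|j-1})<1$. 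The essential trick is that the matched pairing puts ${\pmb h}_j$ against a covariance ${\pmb K}_{\{j-1\}}$ that still contains ${\pmb h}_j{\pmb h}_j^H$, whereas your pairing puts ${\pmb h}_k$ against ${\pmb K}_{\{k\}}$, from which ${\pmb h}_k{\pmb h}_k^H$ has been removed---that is exactly why your factor is unbounded. Your limit-identification step is also vaguer than necessary but repairable: rather than simplifying $f_{1,U}/(1-f_{1,V})$ in closed form, the clean route (the paper's Lemmas on the recursion) is to show that each elementary map $F_i$ sends the LMMSE estimate of $x_i$ to the LMMSE estimate of $x_{i+1}$, so ${\pmb h}_1^H{\pmb K}^{-1}{\pmb y}$ is a fixed point of the one-cycle map and, by uniqueness, is the limit; since both directional fixed points then coincide with (\ref{eq:54}), the belief combiner (\ref{eq67}) needs no further work, and the variance limit need not (and in general does not) equal (\ref{eq:55}).
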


The proof is based on the following \emph{Lemmas}. 

\begin{lemma}
For an arbitrary initial value $\mu(0)$, both the forward and backward recursions for the mean in (\ref{eq65}) converge respectively to a unique, fixed point.
\end{lemma}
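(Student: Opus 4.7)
My plan is to view the forward update (\ref{eq65}) as a purely affine iteration in the means, \emph{decoupled} from the variance recursion (\ref{eq64}), since the coefficients $u_{j,i}, v_{j,i}$ defined in (\ref{eq72})--(\ref{eq73}) depend only on the channel ${\pmb H}$ and noise power $\sigma^{2}$, not on any message. On the $M$-node ring, applying the forward map once at each node in succession yields a single composite affine map $\mu \mapsto U + V_{f}\,\mu$ on any fixed directed edge, where
\begin{align*}
V_{f} \;=\; \prod_{i=0}^{M-1} v_{i+1,\,i}
\end{align*}
is the product of the per-edge slopes around the cycle and $U$ is a constant assembled from the $u_{\cdot,\cdot}$'s and intermediate $v_{\cdot,\cdot}$'s. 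By the Banach contraction-mapping principle, this composite map then has a unique fixed point $\mu^{*}=U/(1-V_{f})$ and every trajectory converges to it geometrically from any $\mu(0)$, provided $|V_{f}|<1$. The backward recursion is handled symmetrically with its own cycle product $V_{b}=\prod_{i}v_{i-1,\,i}$.

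The entire burden of the lemma therefore reduces to proving $|V_{f}|<1$ (and similarly $|V_{b}|<1$). I would begin from the clean form $v_{j,i} = -{\pmb h}_{j}^{H}{\pmb K}_{\{i\}}^{-1}{\pmb h}_{i}$ in (\ref{eq73}) and exploit $\sigma^{2}>0$, which guarantees that ${\pmb K}$ and ${\pmb K}_{\{i\}}$ are strictly positive definite. Setting ${\pmb A}={\pmb K}^{-1/2}{\pmb H}$ one has ${\pmb A}{\pmb A}^{H}={\pmb I}-\sigma^{2}{\pmb K}^{-1}\prec{\pmb I}$, so the LMMSE cross-correlation matrix ${\pmb R}:={\pmb H}^{H}{\pmb K}^{-1}{\pmb H}$ has spectrum strictly below one and ${\pmb I}-{\pmb R}\succ 0$. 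A matrix-inversion-lemma expansion of ${\pmb K}_{\{i\}}={\pmb K}-{\pmb h}_{i}{\pmb h}_{i}^{H}$ (already used to pass from (\ref{eq72})--(\ref{eq73}) to the second equalities there) rewrites $v_{j,i} = -\rho_{j,i}/(1-\gamma_{i})$ with $\rho_{j,i}={\pmb h}_{j}^{H}{\pmb K}^{-1}{\pmb h}_{i}$ and $\gamma_{i}=\rho_{i,i}$. The target then becomes the compact inequality $\prod_{i}|\rho_{i+1,\,i}| < \prod_{i}(1-\gamma_{i})$, i.e.\ the cycle product of off-diagonals of ${\pmb R}$ is dominated by the product of the diagonals of ${\pmb I}-{\pmb R}$.

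This last inequality is the step I expect to absorb almost all of the work. A one-line Cauchy--Schwarz bound $|\rho_{i+1,\,i}|^{2}\le \gamma_{i}\gamma_{i+1}$ gives only $|V_{f}|\le \prod_{i}\gamma_{i}/(1-\gamma_{i})$, which is too weak as soon as any $\gamma_{i}$ approaches one, so a sharper argument exploiting ${\pmb I}-{\pmb R}\succ 0$ globally (not just diagonal-wise) is needed. The $M=2$ case already shows the mechanism: the identity $(1-\gamma_{1})(1-\gamma_{2})=\det({\pmb I}-{\pmb R})+|\rho_{12}|^{2}>|\rho_{12}|^{2}$ yields $|V_{f}|<1$ directly from strict positive-definiteness. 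For general $M$ I anticipate a Schur-complement or induction along the cycle that telescopes the ring product of $|\rho_{i+1,\,i}|$ against successive leading principal minors of ${\pmb I}-{\pmb R}$, each of which is strictly positive. Once $|V_{f}|<1$ is established, Banach immediately delivers convergence at geometric rate $|V_{f}|^{1/M}$; the backward direction is identical with $V_{b}$ in place of $V_{f}$, completing the lemma.
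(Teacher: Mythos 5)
Your convergence skeleton is essentially the paper's: you compose the per-node affine maps (\ref{eq68})--(\ref{eq69}) around the ring into a single sweep map $\mu\mapsto U+V_f\,\mu$ (the paper's $F_{1,T}\circ\mu=f_{1,U}+f_{1,V}\mu$), observe that the coefficients are message-independent, and conclude existence, uniqueness and geometric convergence of the fixed point $U/(1-V_f)$ provided $|V_f|<1$; the paper does exactly this through the explicit geometric series in (\ref{eq:85})--(\ref{eq:86}). The genuine gap is that you never prove $|V_f|<1$ (nor $|V_b|<1$) for general $M$: you verify $M=2$, call the general case ``the step I expect to absorb almost all of the work,'' and only sketch an unexecuted Schur-complement/induction plan. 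That bound is the entire analytic content of the lemma's convergence claim -- the paper devotes a separate lemma to it, proved via $\mathrm{tr}({\pmb A}{\pmb B})\le\mathrm{tr}({\pmb A})\mathrm{tr}({\pmb B})$ for nonnegative definite matrices together with the matrix inversion lemma, giving $|f_{i,V}|\le\prod_j\sigma^2_{j|j-1}/(1+\sigma^2_{j|j-1})<1$ -- so as written your proposal is incomplete at precisely the decisive point.

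The frustrating part is that your reduction is correct and your own $M=2$ mechanism already closes the general case. With ${\pmb R}={\pmb H}^H{\pmb K}^{-1}{\pmb H}$, $\gamma_i={\pmb h}_i^H{\pmb K}^{-1}{\pmb h}_i$ and $\rho_{j,i}={\pmb h}_j^H{\pmb K}^{-1}{\pmb h}_i$, the identity $v_{j,i}=-\rho_{j,i}/(1-\gamma_i)$ holds, and ${\pmb I}-{\pmb R}\succ 0$ for $\sigma^2>0$. Now use the strict positivity of the $2\times2$ principal minor of ${\pmb I}-{\pmb R}$ on each consecutive ring pair $\{i,i+1\}$: $|\rho_{i+1,i}|^2<(1-\gamma_i)(1-\gamma_{i+1})$. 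This is exactly your $M=2$ determinant identity, and it is where your ``Cauchy--Schwarz is too weak'' objection misfires -- you applied the bound to ${\pmb R}$ rather than to ${\pmb I}-{\pmb R}$. Multiplying around the cycle, each diagonal factor appears exactly twice, so $\prod_i|\rho_{i+1,i}|^2<\bigl(\prod_i(1-\gamma_i)\bigr)^2$, i.e. $|V_f|<1$, and symmetrically $|V_b|<1$. Adding that one line would make your route complete and arguably cleaner than the paper's trace-inequality chain; without it, the contraction constant is only asserted, so the lemma is not yet established by your argument.
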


\begin{proof}
Define one iteration as one complete turn of a message passing along the ring and consider, without loss of generality, the message at Node 1. Based on observations 1) through 3) in the previous subsection, we obtain the recursive relations for Node 1, i.e., using an arbitrary initial value $\mu(0)$, we have
\begin{align}
\begin{split}
\label{eq:80} \mu_{\pi, 1 \rightarrow 2} (n) =& \left(  F_M \circ \cdots F_3 \circ F_2 \circ F_1\circ \right) \mu_{\pi, 1\rightarrow 2} (n-1)  \\
=&(F_M \circ \cdots F_3 \circ F_2 \circ F_1\circ)^n \mu(0)
\end{split}\\
\begin{split}
\label{eq:81} \mu_{\pi,1 \rightarrow M} (n) =& \left(B_2 \circ B_3 \circ \cdots B_M \circ B_1 \circ \right) \mu_{\pi,1 \rightarrow M} (n-1)\\
=&  \left(B_2 \circ B_3 \circ \cdots B_M \circ B_1 \circ \right)^n \mu(0)
\end{split}
\end{align}
where $n$ is the iteration number and the collective operations for one iteration of the forward/backward recursion are given, respectively, by
\begin{align}
\label{Eq:F1} F_{1,T} \circ \mu =& F_M \circ \cdots F_3 \circ F_2 \circ F_1 \circ \mu = f_{1,U} + f_{1,V} \mu\\\
\label{Eq:B1} B_{1,T} \circ \mu =& B_2 \circ B_3 \circ \cdots B_M \circ B_1 \circ \mu = b_{1,U} + b_{1,V} \mu
\end{align}
for some constants, $ f_{1,U}, f_{1,V}, b_{1,U}, $and $b_{1,V}$, which, in turn, are monomials of $u_{j,i}$ and $v_{j,i}$ in (\ref{eq70}) and (\ref{eq71}). For example, we have for $M=4$
\begin{align}
F_4 \circ F_3 \circ F_2 \circ F_1 \circ \mu =& \left(u_{1,4}+v_{1,4}u_{4,3}+v_{1,4}v_{4,3}u_{3,2}+v_{1,4}v_{4,3}v_{3,2}u_{2,1}\right)+\left(v_{1,4}v_{4,3}v_{3,2}v_{2,1}\right)\cdot\mu \nonumber \\
B_2 \circ B_3 \circ B_4 \circ B_1 \circ \mu =& \left(u_{1,2}+v_{1,2}u_{2,3}+v_{1,2}v_{2,3}u_{3,4}+v_{1,2}v_{2,3}v_{3,4}u_{4,1}\right)+\left(v_{1,2}v_{2,3}v_{3,4}v_{4,1}\right)\cdot\mu \nonumber
\end{align} 
for which
\begin{align}
\nonumber f_{1,U} =& u_{1,4}+v_{1,4}u_{4,3}+v_{1,4}v_{4,3}u_{3,2}+v_{1,4}v_{4,3}u_{3,2}u_{2,1}\\
\nonumber f_{1,V} =& v_{1,4}v_{4,3}v_{3,2}v_{2,1}\\
\nonumber b_{1,U} =& u_{1,2}+v_{1,2}u_{2,3}+v_{1,2}v_{2,3}u_{3,4}+v_{1,2}v_{2,3}v_{3,4}u_{4,1}\\
\nonumber b_{1,V} =& v_{1,2}v_{2,3}v_{3,4}v_{4,1}.
\end{align}
Here, we can show that $f_{1,V}$ and $b_{1,V}$ are given, respectively, by
\begin{align}
f_{1,V} = \prod_{j=1}^M v_{j,j-1} ~~\text{and}~~ b_{1,V} = \prod_{j=1}^M v_{j,j+1}. \label{eq:84}
\end{align}
On the other hand, using (\ref{Eq:F1}) and (\ref{Eq:B1}), (\ref{eq:80}) and (\ref{eq:81}) become
\begin{align}
\mu_{\pi,1\rightarrow2}(n) =& (F_{1,T}\circ)^n\mu(0) = f_{1,U}\cdot\sum_{k=0}^{n-1}f_{1,V}^k + f_{1,V}^n\cdot\mu(0) \label{eq:85} \\
\mu_{\pi,1\rightarrow M}(n) =& (B_{1,T}\circ)^n\mu(0) = b_{1,U}\cdot\sum_{k=0}^{n-1}b_{1,V}^k + b_{1,V}^n\cdot\mu(0) \label{eq:86}
\end{align}
where, from the fact to be proved in the next \emph{Lemma} that $|f_{i,V}|<1$ and $|b_{i,V}|<1$, we have
\begin{align}
f_{1,V}^n\cdot\mu(0)\rightarrow0,~b_{1,V}^n\cdot\mu(0)\rightarrow0~\text{as}~n\rightarrow\infty. \nonumber
\end{align}
Therefore, the unique fixed point of the mean in GBP 3G is given by
\begin{align}
\lim_{n \rightarrow \infty}\mu_{\pi,1\rightarrow2}(n)~\rightarrow~f_{1,U} \cdot\sum_{k=0}^\infty f_{1,V}^k=\frac{f_{1,U}}{1-f_{1,V}} \label{eq:87} \\
\lim_{n \rightarrow \infty}\mu_{\pi,1\rightarrow M}(n)~\rightarrow~b_{1,U} \cdot\sum_{k=0}^\infty b_{1,V}^k=\frac{b_{1,U}}{1-b_{1,V}}. \label{eq:88}
\end{align}
\end{proof}

\begin{lemma}
$|f_{i,V}|=|\prod_{j=1}^M v_{j,j-1}|<1~\text{and}~|b_{i,V}|=|\prod_{j=1}^M v_{j,j+1}|<1~$for all $i$.
\end{lemma}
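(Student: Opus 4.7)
The plan is to reinterpret $v_{j,i}$ as a conditional Gaussian regression coefficient and then invoke the strict positive definiteness of the posterior covariance matrix, after which a telescoping product around the ring cycle gives the result. First I would apply the Sherman--Morrison identity to $\pmb{K}_{\{i\}} = \pmb{K} - \pmb{h}_i \pmb{h}_i^H$, where $\pmb{K} = \sigma^2 \pmb{I} + \pmb{H}\pmb{H}^H$, which yields $\pmb{K}_{\{i\}}^{-1} \pmb{h}_i = \pmb{K}^{-1} \pmb{h}_i / (1 - \pmb{h}_i^H \pmb{K}^{-1} \pmb{h}_i)$; the denominator is strictly positive since $\sigma^2>0$ forces $\pmb{h}_i^H\pmb{K}^{-1}\pmb{h}_i<1$. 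Substituting into (73) then gives $v_{j,i} = -\pmb{h}_j^H \pmb{K}^{-1} \pmb{h}_i / (1 - \pmb{h}_i^H \pmb{K}^{-1} \pmb{h}_i)$.

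Next I invoke the matrix inversion lemma in the form $\pmb{H}^H \pmb{K}^{-1} \pmb{H} = \pmb{I} - \pmb{\Sigma}$, where $\pmb{\Sigma} \equiv (\pmb{I} + \pmb{H}^H \pmb{H}/\sigma^2)^{-1}$ is precisely the posterior covariance matrix of $\pmb{x}$ given $\pmb{y}$ under the working Gaussian prior $p(\pmb{x})=\mathcal{CN}(\pmb{0},\pmb{I})$. Reading off entries gives $\pmb{h}_j^H\pmb{K}^{-1}\pmb{h}_i=-\Sigma_{ji}$ for $j\neq i$ and $1-\pmb{h}_i^H\pmb{K}^{-1}\pmb{h}_i=\Sigma_{ii}$. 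Combining the two substitutions collapses $v_{j,i}$ to the clean identity
\begin{align}
v_{j,i} \;=\; \frac{\Sigma_{ji}}{\Sigma_{ii}}, \qquad j \neq i, \nonumber
\end{align}
so $v_{j,i}$ is exactly the posterior linear-regression coefficient of $x_j$ on $x_i$. Because $\pmb{I}+\pmb{H}^H\pmb{H}/\sigma^2\succ \pmb{0}$ for any $\sigma^2>0$, the matrix $\pmb{\Sigma}$ is Hermitian positive definite, and so is every principal submatrix.

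From here the bound is short. Positive definiteness of the $2\times 2$ principal submatrix of $\pmb{\Sigma}$ on rows/columns $\{i,j\}$ gives $|\Sigma_{ji}|^2 < \Sigma_{ii}\Sigma_{jj}$, so $|v_{j,i}|^2 < \Sigma_{jj}/\Sigma_{ii}$. Taking the product around the forward cycle (indices modulo $M$, with $i=j-1$) and telescoping the diagonal entries yields
\begin{align}
\Bigl|\prod_{j=1}^{M} v_{j,j-1}\Bigr|^2 \;=\; \prod_{j=1}^{M} |v_{j,j-1}|^2 \;<\; \prod_{j=1}^{M} \frac{\Sigma_{jj}}{\Sigma_{j-1,j-1}} \;=\; 1, \nonumber
\end{align}
where the last equality holds because each diagonal entry $\Sigma_{kk}$ appears exactly once in a numerator and once in a denominator. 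The backward case $i=j+1$ is identical with the roles of the off-diagonal neighbours swapped. Since $f_{i,V}$ and $b_{i,V}$ are cyclic rotations of the same products, the strict bound holds for every $i$, proving the lemma.

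The only genuinely non-trivial step is recognizing the reformulation $v_{j,i} = \Sigma_{ji}/\Sigma_{ii}$, which is hidden inside the successive applications of Sherman--Morrison and the matrix inversion lemma; once that identification is made, the result follows from the elementary observation that off-diagonal entries of a positive definite matrix are strictly bounded by the geometric mean of the corresponding diagonal entries, together with the telescoping of conditional variances around the ring.
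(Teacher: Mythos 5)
Your proof is correct, and it takes a genuinely different route from the paper's. The paper bounds $|f_{i,V}|$ head-on: using the matrix inversion lemma it rewrites each factor so that $|f_{i,V}|=|\prod_j {\pmb h}_j^H{\pmb K}_{\{j-1\}}^{-1}{\pmb h}_{j-1}|$, converts the scalar chain into a trace of a product of rank-one nonnegative-definite factors, and applies $\text{tr}({\pmb A}{\pmb B})\le\text{tr}({\pmb A})\text{tr}({\pmb B})$ to bound it by $\prod_j \sigma_{j|j-1}^2/(1+\sigma_{j|j-1}^2)<1$, an explicit per-factor rate that the paper reuses immediately after the lemma when discussing convergence speed. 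You instead collapse $v_{j,i}=-{\pmb h}_j^H{\pmb K}_{\{i\}}^{-1}{\pmb h}_i$ (via Sherman--Morrison together with ${\pmb H}^H{\pmb K}^{-1}{\pmb H}={\pmb I}-{\pmb \Sigma}$, ${\pmb \Sigma}=({\pmb I}+{\pmb H}^H{\pmb H}/\sigma^2)^{-1}$) into the posterior regression coefficient $\Sigma_{ji}/\Sigma_{ii}$, and the strict bound then follows from positivity of the $2\times2$ principal minors of ${\pmb \Sigma}\succ{\pmb 0}$ plus telescoping of the diagonal entries around the cycle; I checked the algebraic identities ($\Sigma_{jj}=1-{\pmb h}_j^H{\pmb K}^{-1}{\pmb h}_j$, i.e., the minimum MSE in (\ref{eq:55}), and $1-{\pmb h}_i^H{\pmb K}^{-1}{\pmb h}_i>0$ for $\sigma^2>0$), the strictness survives the product since all the comparison factors $\Sigma_{jj}/\Sigma_{j-1,j-1}$ are positive, and the forward/backward and ``for all $i$'' claims are immediate because the cyclic products of scalars coincide. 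What your argument buys is conceptual cleanliness and unambiguous strictness: you avoid the paper's repeated use of the trace inequality on a chain of more than two factors, and your form exposes $|f_{1,V}|^2=\prod_j |\Sigma_{j,j-1}|^2/(\Sigma_{jj}\Sigma_{j-1,j-1})$ as a product of squared posterior correlation coefficients, which is the walk-summability-style picture of Gaussian BP on a cycle. What the paper's route buys is the explicit bound in terms of the conditional quantities $\sigma_{j|j-1}^2$, which it quotes right after the lemma to characterize the convergence rate; if you wanted that remark as well, your correlation-coefficient expression provides an equivalent (and arguably sharper) rate with no extra work.
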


\begin{proof}
By plugging  into (\ref{eq73}) into (\ref{eq:84}), we have for all $i$.
\begin{align}
\nonumber |f_{i,V}| &= \left|  \prod_{j=1}^M \frac {{\pmb h}_j^H {\pmb K}_{\{j,j-1 \}}^{-1} {\pmb h}_{j-1}} {1+{\pmb h}_j^H {\pmb K}_{\{j,j-1 \}}^{-1} {\pmb h}_{j}} \right| = \left| \prod_{j=1}^M {\pmb h}_j^H {\pmb K}_{\{j-1\}}^{-1} {\pmb h}_{j-1} \right| \\
\nonumber &= \left| {\pmb h}_1^H {\pmb K}_{M}^{-1} {\pmb h}_M {\pmb h}_M^H {\pmb K}_{\{M-1\}}^{-1} \cdots  {\pmb K}_{2}^{-1} {\pmb h}_2 {\pmb h}_2^H {\pmb K}_{1}^{-1} {\pmb h}_1  \right| \\
\nonumber &\overset{(a)}= \left| \text{tr}\left( {\pmb h}_1 {\pmb h}_1^H {\pmb K}_{\{M\}}^{-1} {\pmb h}_M {\pmb h}_M^H {\pmb K}_{\{M-1\}}^{-1} \cdots {\pmb K}_{2}^{-1} {\pmb h}_2 {\pmb h}_2^H {\pmb K}_{1}^{-1} \right) \right| \\
\nonumber & \overset{(b)}\leq \left| \prod_{j=1}^M \text{tr}({\pmb h}_j {\pmb h}_j^M {\pmb K}_{\{j-1\}}^{-1}) \right| = \left| \prod_{j=1}^M {\pmb h}_j^H {\pmb K}_{\{j-1\}}^{-1} {\pmb h}_j \right| \nonumber \\
&\overset{(c)}= \prod_{j=1}^M \frac{{\pmb h}_j^H {\pmb K}_{\{j,j-1\}}^{-1} {\pmb h}_j}{1+{\pmb h}_j^H {\pmb K}_{\{j,j-1\}}^{-1} {\pmb h}_j} \nonumber= \prod_{j=1}^M \frac{\sigma_{j|j-1}^2}{1+\sigma_{j|j-1}^2}<1 \nonumber
\end{align} 
where, $(a)$ follows by the fact that ${\pmb a}^H {\pmb b} = \text{tr}({\pmb b} {\pmb a}^H)$ for arbitrary vectors ${\pmb a}$ and ${\pmb b}$, and $(b)$ results from $\text{tr}({\pmb A} {\pmb B}) \leq \text{tr}({\pmb A})\text{tr}({\pmb B})$ for arbitrary non-negative definite matrices ${\pmb A}$ and ${\pmb B}$. Also, $( c)$ follows by the following matrix inversion \emph{Lemma}
\begin{align}
{\pmb h}_j^H {\pmb K}_{\{j+1\}}^{-1} &= {\pmb h}_j^H\left({\pmb K}_{\{j,j+1\}}+{\pmb h}_j{\pmb h}_j^H \right)^{-1} ={\pmb h}_j^H \left( {\pmb K}_{\{j,j+1\}}^{-1}-{\pmb K}_{\{j,j+1\}}^{-1} {\pmb h}_j (1+{\pmb h}_j^H {\pmb K}_{\{j,j+1\}}^{-1} {\pmb h}_j)^{-1} {\pmb h}_j^H {\pmb K}_{\{j,j+1\}}^{-1} \right) \nonumber\\
&= \left(1-\frac{{\pmb h}_j^H {\pmb K}_{\{j,j+1\}}^{-1} {\pmb h}_j}{1+{\pmb h}_j^H {\pmb K}_{\{j,j+1\}}^{-1} {\pmb h}_j}\right) {\pmb h}_j^H {\pmb K}_{\{j,j+1\}}^{-1}   = \left(\frac{1}{1+{\pmb h}_j^H {\pmb K}_{\{j,j+1\}}^{-1} {\pmb h}_j}\right) {\pmb h}_j^H {\pmb K}_{\{j,j+1\}}^{-1}.\nonumber
\end{align}
For the backward recursion, $|b_{i,V}|<1$ can also be proved in a similar way.
\end{proof}


In (\ref{eq:85}) and (\ref{eq:86}), we see that the convergence rate depends on
\begin{align}
\left| \prod_{j=1}^M {\pmb h}_j^H {\pmb K}^{-1}_{\{j-1  \}} {\pmb h}_{j-1}    \right| \leq \prod_{j=1}^M \frac{\sigma^2_{j|j-1}}{1+\sigma^2_{j|j-1}} <1 \nonumber
\end{align} 
which is similar to the result in\cite{R16}. Note that ${\pmb h}_i^H {\pmb K}^{-1}_{\{i-1  \}} {\pmb h}_{i-1}$ reflects the channel correlation between neighboring antennas.

On the other hand, the operations, $F_i$ and $B_i$, are not permutable, such that $F_i \circ F_j \circ \mu$ and $F_j \circ F_i \circ \mu$ may be different, and so are $F_{j, T} \circ \mu$ and $F_{i,T} \circ \mu$ for $j \neq i$. That is, the fixed point for each node may differ from one another. 

The following two \emph{Lemmas} show that the fixed points in (\ref{eq:87}) and (\ref{eq:88}) are both equal to the MMSE estimate in (\ref{eq:54}). 
\begin{lemma}
In the forward recursion, $\mu_{\pi, i \rightarrow i+1} (n)$  is the linear MMSE estimates of $x_{i+1}$ provided that the previous message, $\mu_{\pi, i-1 \rightarrow i}(n)$, is the linear MMSE estimates of $x_i$. Likewise, in the backward recursion, $\mu_{\pi, i \rightarrow i-1}(n) $, is the linear MMSE estimates of $x_{i-1}$ provided that $\mu_{\pi, i+1 \rightarrow i}(n)$ is the linear MMSE estimates of $x_i$.
\end{lemma}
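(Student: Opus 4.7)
The plan is to show that the affine forward map sends a linear-MMSE mean for $x_i$ to a linear-MMSE mean for $x_{i+1}$; by the obvious symmetry between $F_i$ and $B_i$ in (\ref{eq70})--(\ref{eq71}), the same argument with $i+1$ replaced by $i-1$ handles the backward recursion. The entire proof reduces to a single row identity that is essentially the Sherman--Morrison manipulation already used at the end of Lemma~3's proof, so no new machinery is needed.

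First I would substitute the inductive hypothesis $\mu_{\pi,i-1\to i}(n)=\pmb{h}_i^H\pmb{K}^{-1}\pmb{y}$ into (\ref{eq68}) together with the closed forms (\ref{eq72})--(\ref{eq73}) of $u_{i+1,i}$ and $v_{i+1,i}$. After factoring, the desired conclusion $\mu_{\pi,i\to i+1}(n)=\pmb{h}_{i+1}^H\pmb{K}^{-1}\pmb{y}$ becomes
\[
\pmb{h}_{i+1}^H\left(\pmb{K}_{\{i\}}^{-1}-\pmb{K}_{\{i\}}^{-1}\pmb{h}_i\pmb{h}_i^H\pmb{K}^{-1}\right)\pmb{y}=\pmb{h}_{i+1}^H\pmb{K}^{-1}\pmb{y}.
\]
I would deduce this from the stronger row identity
\[
\pmb{h}_i^H\pmb{K}^{-1}=\frac{\pmb{h}_i^H\pmb{K}_{\{i\}}^{-1}}{1+\pmb{h}_i^H\pmb{K}_{\{i\}}^{-1}\pmb{h}_i};
\]
substituting it into the middle factor on the left-hand side above and recognizing the Sherman--Morrison expansion of $\pmb{K}^{-1}$ collapses the expression to $\pmb{h}_{i+1}^H\pmb{K}^{-1}\pmb{y}$ immediately.

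The second step is to verify the row identity itself. Since $\pmb{K}=\pmb{K}_{\{i\}}+\pmb{h}_i\pmb{h}_i^H$ by (\ref{eq35}) with $\Phi=\emptyset$, the Sherman--Morrison expansion of $\pmb{K}^{-1}$ is literally the one displayed at the end of Lemma~3's proof (with the pair $\{j,j+1\}$ there renamed to $\{i\}$); left-multiplying that expansion by $\pmb{h}_i^H$ and collecting the common scalar $(1+\pmb{h}_i^H\pmb{K}_{\{i\}}^{-1}\pmb{h}_i)^{-1}$ produces the identity at once. The non-zero noise assumption guarantees that $\pmb{K}$, and hence this scalar, is well-defined.

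I do not expect any real obstacle: the lemma is essentially a verification that the affine updates (\ref{eq68})--(\ref{eq69}) are the MMSE-innovation updates in disguise. The only care needed is bookkeeping between $\pmb{K}$, $\pmb{K}_{\{i\}}$, and $\pmb{K}_{\{i,i\pm 1\}}$, and keeping the variance recursion (\ref{eq64}) out of this particular argument, since the statement concerns the mean alone and makes no claim about $\sigma^2_{\pi,i\to i\pm 1}$.
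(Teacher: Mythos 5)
Your proposal is correct and follows essentially the same route as the paper: plug the closed forms of $u_{i+1,i}$ and $v_{i+1,i}$ into the affine update, reduce to $\pmb{h}_{i+1}^H\bigl(\pmb{K}_{\{i\}}^{-1}-\pmb{K}_{\{i\}}^{-1}\pmb{h}_i\pmb{h}_i^H\pmb{K}^{-1}\bigr)\pmb{y}=\pmb{h}_{i+1}^H\pmb{K}^{-1}\pmb{y}$, and close it with the Sherman--Morrison row identity $\pmb{h}_i^H\pmb{K}^{-1}=\pmb{h}_i^H\pmb{K}_{\{i\}}^{-1}/\bigl(1+\pmb{h}_i^H\pmb{K}_{\{i\}}^{-1}\pmb{h}_i\bigr)$, treating the backward direction by symmetry exactly as the paper does.
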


\begin{proof}
With ${\pmb c}_j = {\pmb K}^{-1} {\pmb h}_j$, the linear MMSE estimate of $x_i$ is given by ${\pmb h}_i^H {\pmb K}^{-1} {\pmb y}$. And, hence, the proof is to show from (\ref{eq68}) and (\ref{eq69}) that
\begin{align}
{\pmb h}^H_{i+1} {\pmb K}^{-1} {\pmb y} &= F_i \circ ({\pmb h}_i^H {\pmb K}^{-1} {\pmb y}) = u_{i+1, i} + \nu_{i+1, i} \cdot ({\pmb h}^H_i {\pmb K}^{-1} {\pmb y})\label{eq:89} \\
{\pmb h}^H_{i-1} {\pmb K}^{-1} {\pmb y} &= B_i \circ({\pmb h}_i^H {\pmb K}^{-1} {\pmb y}) = u_{i-1, i} + \nu_{i-1, i} \cdot ({\pmb h}^H_i {\pmb K}^{-1} {\pmb y}) \nonumber
\end{align}
where $u_{j,i}$ and $\nu_{j,i}$ are given by (\ref{eq72}) and (\ref{eq73}). Plugging these into the right hand side of (\ref{eq:89}) for the forward recursion, we finally have
\begin{align}
\begin{split}
u_{i+1,i} + \nu_{i+1, i} \cdot ({\pmb h}_i^H {\pmb K}^{-1} {\pmb y}) &= {\pmb h}_{i+1}^H {\pmb K}^{-1}_{\{i\}} {\pmb y} -  {\pmb h}_{i+1}^H {\pmb K}^{-1}_{\{i\}} {\pmb h}_i {\pmb h}_i^H {\pmb K}^{-1} {\pmb y} = {\pmb h}^H_{i+1} \left( {\pmb K}^{-1}_{\{i\}}  - {\pmb K}^{-1}_{\{i\}}   {\pmb h}_i {\pmb h}_i^H {\pmb K}^{-1}    \right) {\pmb y}\\
&=  {\pmb h}^H_{i+1} \left( {\pmb K}^{-1}_{\{i\}}  - {\pmb K}^{-1}_{\{i\}}   {\pmb h}_i \frac{{\pmb h}_i^H {\pmb K}^{-1}_{\{i \}}}{1+{\pmb h}_i^H } {\pmb K}^{-1}_{\{i \}} {\pmb h}_i  \right) {\pmb y}={\pmb h}^H_{i+1} {\pmb K}^{-1} {\pmb y}. \nonumber
\end{split}
\end{align}
Similarly, for the backward recursion, we obtain
\begin{align}
u_{i-1,i} + \nu_{i-1,i} \cdot ({\pmb h}^H_i {\pmb K}^{-1} {\pmb y}) = {\pmb h}^H_{i-1} {\pmb K}^{-1} {\pmb y}. \nonumber
\end{align}
\end{proof}

\begin{lemma}
Both the fixed points, $\frac{f_{j,U}}{1-f_{j,V}}$ and $\frac{b_{j,U}}{1-b_{j,V}}$, are equal to the MMSE estimate of $x_j$, i.e., ${\pmb h}^H_{j} {\pmb K}^{-1} {\pmb y}$.
\end{lemma}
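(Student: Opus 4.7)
The plan is to exploit Lemma 4 together with the uniqueness of the fixed point established in Lemma 2, avoiding any direct expansion of the monomials defining $f_{j,U}, f_{j,V}, b_{j,U}, b_{j,V}$.

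First, I would observe that Lemma 4 says exactly that the elementary operation $F_i$ maps the linear MMSE estimate ${\pmb h}_i^H {\pmb K}^{-1} {\pmb y}$ of $x_i$ to the linear MMSE estimate ${\pmb h}_{i+1}^H {\pmb K}^{-1} {\pmb y}$ of $x_{i+1}$, and analogously for $B_i$ in the backward direction. Consequently, composing the operations around a full cycle starting (and ending) at node $j$, we get
\begin{align}
F_{j,T}\circ ({\pmb h}_j^H {\pmb K}^{-1} {\pmb y}) &= ({\pmb h}_j^H {\pmb K}^{-1} {\pmb y}), \nonumber \\
B_{j,T}\circ ({\pmb h}_j^H {\pmb K}^{-1} {\pmb y}) &= ({\pmb h}_j^H {\pmb K}^{-1} {\pmb y}), \nonumber
\end{align}
since iterating the per-edge MMSE-preserving property once around the ring returns the estimate to node $j$ unchanged. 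In other words, the linear MMSE estimate is a fixed point of the affine recursions $F_{j,T}\circ\mu = f_{j,U}+f_{j,V}\mu$ and $B_{j,T}\circ\mu = b_{j,U}+b_{j,V}\mu$.

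Next, by Lemma 2 (with Lemma 3 supplying $|f_{j,V}|<1$ and $|b_{j,V}|<1$ for non-zero noise power), each of these affine maps is a strict contraction and hence has a unique fixed point, namely $\frac{f_{j,U}}{1-f_{j,V}}$ and $\frac{b_{j,U}}{1-b_{j,V}}$. Combining uniqueness with the fact just established that ${\pmb h}_j^H {\pmb K}^{-1} {\pmb y}$ is a fixed point yields
\begin{align}
\frac{f_{j,U}}{1-f_{j,V}} = {\pmb h}_j^H {\pmb K}^{-1} {\pmb y} = \frac{b_{j,U}}{1-b_{j,V}}, \nonumber
\end{align}
which is the desired identity.

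I do not expect any genuinely hard step here; the work was already done in Lemma 4 (the MMSE preservation under one application of $F_i$ or $B_i$, which relied on the matrix inversion lemma) and in Lemma 2 (contractivity, hence uniqueness of the fixed point). The only mild care needed is to verify that the ``one complete turn'' described in Lemma 2 corresponds to composing the per-edge maps exactly around the ring, so that the MMSE-preservation property of Lemma 4 chains cleanly through all $M$ nodes and returns to the starting node $j$. Once that is noted, the lemma follows immediately, and together with Lemmas 1--4 it completes the proof that the Gaussian BP 3G mean iterates converge to the linear MMSE estimate \eqref{eq:54}, establishing Theorem 1.
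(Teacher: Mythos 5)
Your proposal is correct and follows essentially the same route as the paper: you chain Lemma 4 around the ring to conclude that ${\pmb h}_j^H{\pmb K}^{-1}{\pmb y}$ is a fixed point of the affine full-turn maps $F_{j,T}$ and $B_{j,T}$, and then use $|f_{j,V}|<1$, $|b_{j,V}|<1$ (Lemma 3, via Lemma 2) to identify it with $\frac{f_{j,U}}{1-f_{j,V}}$ and $\frac{b_{j,U}}{1-b_{j,V}}$. The only cosmetic difference is that you make the uniqueness-of-fixed-point step explicit, whereas the paper solves the affine fixed-point equation directly; the content is the same.
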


\begin{proof}
Without loss of generality, let us consider the first data symbol, $x_1$. Starting from ${\pmb h}^H_{2} {\pmb K}^{-1} {\pmb y} = F_1 \circ ({\pmb h}^H_{1} {\pmb K}^{-1} {\pmb y})$, we can successively apply the operations, $F_2 \circ$, $F_3 \circ$,..., $F_M \circ$ to finally obtain
\begin{align}
F_M \circ F_{M-1} \circ \cdot \cdot \cdot \circ F_2 \circ F_1 \circ ({\pmb h}^H_{1} {\pmb K}^{-1} {\pmb y}) &=  F_{1,T} \circ ({\pmb h}^H_{1} {\pmb K}^{-1} {\pmb y}) \nonumber \\
&=  f_{1,U} + f_{1,V} \cdot ({\pmb h}^H_{1} {\pmb K}^{-1} {\pmb y}) \nonumber \\
&= ({\pmb h}^H_{1} {\pmb K}^{-1} {\pmb y}) \nonumber
\end{align} 
\noindent where the first and second equality are obtained by the definition of $F_{1,T} \circ$ and $( f_{1,U}, f_{1,V})$, respectively, and the last equality is from \emph{Lemma 4}, i.e., ${\pmb h}^H_{1} {\pmb K}^{-1} {\pmb y} = F_M \circ ({\pmb h}^H_{M} {\pmb K}^{-1} {\pmb y})$. From the last equality, we obtain $\frac{f_{j,U}}{1-f_{j,V}} = {\pmb h}^H_{j} {\pmb K}^{-1} {\pmb y}$ and, in a similar way, we also can prove that $\frac{b_{j,U}}{1-b_{j,V}} = {\pmb h}^H_{j} {\pmb K}^{-1} {\pmb y}$.
\end{proof}

\begin{proof}
The proof of \emph{Theorem~1} is now obvious from the above lemmas, i.e., from \emph{Lemma 2} the mean in the Gaussian BP over the ring-type graphical model converges to a unique fixed point and \emph{Lemma 5} shows that the fixed point of the mean is equal to the linear MMSE estimates in (\ref{eq:54}).
\end{proof}

Note that the $Theorem$~1 holds for any channel matrices if noise variance is not zero since, for $\sigma^2 > 0$, the covariance matrices, ${\pmb K}_{\{i,j\}}$'s in (\ref{eq35}) are always invertible so that there certainly exist the MMSE estimator in (\ref{eq38}) and the translation functions in (\ref{eq45}) for all pair of $(i,j)$.

Since the message-update rule for the variance in (\ref{eq74}) and (\ref{eq75}) have the same form as in (\ref{eq68}) and (\ref{eq69}), we can also prove the convergence of the variance in GBP 3G, which can be summarized by the following \emph{Lemma}.

\begin{lemma}
For an arbitrary initial value $\sigma^2(0)$, both the forward and backward recursions for the variance in (\ref{eq58}) and (\ref{eq60}) converge, respectively, to a unique fixed point.
\end{lemma}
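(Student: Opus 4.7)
The plan is to mimic the proof of \emph{Lemma 2} verbatim, exploiting the fact that the variance recursions in (\ref{eq:74})--(\ref{eq:75}) share the first-order affine form of the mean recursions (\ref{eq:68})--(\ref{eq:69}). Concretely, I would collapse one full pass around the ring into a single affine map
\begin{align}
F'_{1,T} \circ \mu = f'_{1,U} + f'_{1,V} \cdot \mu, \nonumber
\end{align}
in exact analogy with (\ref{Eq:F1}), so that $n$ iterations starting from an arbitrary $\sigma^2(0)$ yield the geometric expansion
\begin{align}
\sigma^2_{\pi,1\to 2}(n) = f'_{1,U}\cdot\sum_{k=0}^{n-1}(f'_{1,V})^k + (f'_{1,V})^n\cdot\sigma^2(0). \nonumber
\end{align}
Convergence to a unique fixed point then follows immediately provided $|f'_{1,V}|<1$.

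The key reduction --- and the only place real work is required --- is to tie $f'_{1,V}$ back to the constant $f_{1,V}$ already controlled in \emph{Lemma 3}. Comparing (\ref{eq:79}) with (\ref{eq:73}) gives $v'_{j,i} = |v_{j,i}|^2$, so the collapsed multiplier factors as
\begin{align}
f'_{1,V} = \prod_{j=1}^{M} v'_{j,j-1} = \prod_{j=1}^{M} |v_{j,j-1}|^2 = |f_{1,V}|^2, \nonumber
\end{align}
which is strictly less than $1$ by \emph{Lemma 3} (and manifestly non-negative as a product of squared magnitudes). Consequently $(f'_{1,V})^n\cdot\sigma^2(0)\to 0$ and the geometric series converges, yielding the unique fixed point
\begin{align}
\lim_{n\to\infty}\sigma^2_{\pi,1\to 2}(n) = \frac{f'_{1,U}}{1-f'_{1,V}}, \nonumber
\end{align}
independent of the initialization $\sigma^2(0)$.

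The backward recursion is handled symmetrically: collapse into $B'_{1,T}\circ\mu = b'_{1,U}+b'_{1,V}\mu$ using (\ref{eq:77}), identify $b'_{1,V}=|b_{1,V}|^2<1$, and run the same geometric-series argument. Since the derivation is identical for any node index $i$, the claim follows for every pair of directed messages on the ring.

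There is essentially no hard step here: because \emph{Lemma 3} already bounds $\prod_{j} v_{j,j-1}$ via trace inequalities and the matrix inversion lemma, the variance case is strictly easier --- the squared moduli sidestep any delicate phase argument, and the contraction factor for the variance is just the square of the one established for the mean. The only mild caveat is verifying that $u'_{j,i}$ and $v'_{j,i}$ remain finite; this holds whenever $\sigma^2>0$, since then ${\pmb K}_{\{j,i\}}$ is invertible and both $\sigma^2_{j|i}$ and $a_{j|i,i}$ are well-defined, exactly as remarked after \emph{Theorem 1}.
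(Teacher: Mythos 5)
Your proof is correct and follows essentially the argument the paper intends (it only sketches it by appeal to the earlier lemmas): collapse one turn of the ring into an affine map as in the proof of \emph{Lemma 2}, and bound the contraction factor via \emph{Lemma 3} using the observation from (\ref{eq73}) and (\ref{eq79}) that $v'_{j,i}=|v_{j,i}|^2$, so $f'_{1,V}=|f_{1,V}|^2<1$ and the geometric series gives a unique, initialization-independent fixed point.
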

The proof is similar to that of \emph{Lemma~3}. Unfortunately, however, the fixed point is not necessarily correct. That is, it may not equal the MMSE in (\ref{eq:55}), as also confirmed in~\cite{R24}. In~\cite{R28}, the convergence property of the BP over such ring-type graph was shown to be optimal for binary input. For Gaussian input, however, it is optimal only in the mean, i.e., the mean converges to the fixed point that is equal to the MMSE estimates, $\hat{x}_j$ in (\ref{eq:54}), and we cannot say so in a strict sense since the fixed point of the variance is not equal to the MSE in (\ref{eq:55}), the MAP estimates on the variance. 

It will also be worth comparing GBP 2G and 3G proposed in this paper and the Gaussian BP in \cite{R17}, \cite{R19}, and\cite{R16}, all of which are based on the direct decomposition of Gaussian PDF, and, as noticed in \cite{R19}, are the same algorithm. The comparison can be made in several aspects, i.e., in complexity and convergence. In complexity, the Gaussian BP in \cite{R17}, \cite{R19}, and \cite{R16} is much simpler than GBP 2G and 3G proposed here. Note that (1) Gaussian BP in \cite{R17}, \cite{R19}, and \cite{R16} does not require preprocessing while GBP 2G and 3G in this paper do and (2) the complexity of the post iteration for the former is obviously the same as that of GBP 2G since they utilize the same graphical model, even though the post iteration of GBP 3G is a little bit less complex than GBP 2G. Based on this, the overall complexity of the proposed GBP 2G and 3G is certainly more complex than those in \cite{R17}, \cite{R19}, and \cite{R16}. Now, let us consider their convergence. Basically, GBP 3G proposed in this paper and the Gaussian BP in \cite{R17}, \cite{R19}, and \cite{R16} results in an MMSE solution (in mean) if they converge, as proved here for GBP 3G and in \cite{R16} for Gaussian BP with the direct decomposition. This means that, once converged, they will perform the same. Unfortunately, the convergence of the Gaussian BP in \cite{R17}, \cite{R19}, and \cite{R16} seems not to be assured while GBP 3G surely converges.

\section{Simulation Results}
In this section, we present simulation results for the iterative algorithms with and without channel coding. For channel coding, we used DVB-S2 LDPC code of rates 3/4 and length 64800 ~\cite{R30}. The performances of ML, MMSE, and the bi-diagonalization approach in~\cite{R9} are also evaluated as references. In the transmitter, a block (48600 bits) of random information bits are generated first and then coded using the LDPC encoder and then interleaved with a random interleaver and modulated into a sequence of $2^m$-ary symbols. The symbol sequence is then divided into sub-blocks of $M$ symbols, each of which is fed to a transmit antenna, where $M$ corresponds to the number of transmit antennas. At the receiver, the sequence of received vectors is passed to a MIMO detector, which generates the estimates of symbol likelihoods and LLRs for each coded bit . The LLR is then de-interleaved and decoded by using a generic LDPC decoder\footnote{In the transmitter and the receiver, the interleaving/de-interleaving and channel coding/decoding is used if channel coding is applied.} Note that no 'turbo principle' is applied since it is not our focus in this paper. This means that the LDPC decoding begins only after the inner iteration in MIMO detector is finished. Regarding the MIMO channel, we generated, for each transmitted data vector, an independent and identically distributed (i.i.d.) MIMO channel matrix, of which each element is also an i.i.d. complex Gaussian random variable with mean 0 and variance 1. The resulting channel can be regarded as a fully interleaved frequency selective MIMO channel that can be seen on top of the orthogonal frequency division multiplexing (OFDM), especially for those channels where the transmission bandwidth is much larger than the channel coherence bandwidth.

Fig.~\ref{Fig:5} shows a comparison of bit error rate performance as a function of signal-to-noise ratio (SNR) $(1/\sigma^2)$ for ML, BP1 in~\cite{R18}, MMSE, the bi-diagonalization approach in~\cite{R9}, and the proposed BP-based detector with the fully-connected and ring type pair-wise model. We use a $4\times 4$ antenna configuration and QPSK modulation. We could confirm from Fig.~\ref{Fig:5} that the pair-wise MRF-based detector performs as well as the ML with soft decisions (i.e., using (\ref{eq:4}) and (\ref{eq:5})). The SNR gap between the proposed scheme and the ML is shown to be around 0.1 and 0.3, respectively. 

Fig. \ref{Fig:8a} shows a comparison of bit error rate performance without channel coding as a function of signal-to-noise ratio (SNR) (1/$\sigma^2$) for ML, MMSE, and the proposed detector of fully-connected and ring type model. We used the same antenna configuration and modulation size, but without channel coding. As shown in the figure, the tendency in the relative performance looks similar to that with channel coding.

It is also worth comparing the performance of BP2 and BP3 with the one in \cite{R34} (Table-I), where the pair-wise MRF obtained by the direct decomposition of Gaussian PDF is used. The performance comparison is shown in Fig.~\ref{Fig:r2} for BPSK modulation and the same LDPC coding. As shown in the figure, the performance of the one in \cite{R34} is almost the same as that of BP2. We set the number of iterations to 4 for BP2 and BP3 and 6 for the one in \cite{R34}. We also tried to obtain the results for 4PAM. Unfortunately, however, the algorithm in \cite{R34} failed to work for 4-PAM and this is one of the advantages of using the proposed scheme over the existing (fully-connected) MRF based MIMO detection.\footnote{ The reason we consider here only one-dimensional constellation like BPSK or 4-PAM is that the algorithm in \cite{R34} is applicable only to those real constellation and we just wanted to use the algorithm as is since any modification may cause unexpected results.} 

Fig.~\ref{Fig:6} shows the BER performance for a $6 \times 6$ antenna configuration with the same modulation and channel coding. The SNR gap between the proposed scheme and the ML is now approximately 0.75 dB for the fully connected pair-wise graph and 1 dB for the ring-type one, respectively. Although the performance degradation compared to the ML is larger than for a $4\times 4$ antenna configuration, the SNR gain over the MMSE detector is around 3.5 dB. 

In Fig.~\ref{Fig:8}, BER performance with higher modulation order (16-QAM) is shown. We used a $4\times 4$ antenna configuration and the same channel coding. Here, the SNR gap between the proposed method and the ML is shown to be around 1 dB for the BP2 over the fully connected pair-wise graph and 0.7 dB for BP3 over the ring-type, respectively. Note that the performance of BP2 over the fully connected pair-wise graph is now worse than that of BP3 over the ring-type. Here, we set the number of iterations of BP2 and BP3 to four and six. One possible reason for why the fully-connected graph perform worse than the ring-type for higher order QAM can be inferred from the convergence behavior as shown in Fig.9, where  it is shown that the convergence for fully-connected pair-wise graph is stuck at three or four iterations and the BER is increased sharply with more number of iterations, while, for the ring-type, it converges steadily.

In Figs.~\ref{Fig:5} to \ref{Fig:8}, the number of iterations was set based on the simulation results in Figs.~\ref{Fig_iter1} and \ref{Fig_iter2}, which we performed with different number of antennas and modulation size, to give insights into how many iterations are required for a satisfactory performance. As shown in the simulation results, the number of iterations required for convergence depends on the modulation sizes, but not much on the number of antennas. Specifically, for BP3, we can say that we need more number of iterations for the eventual convergence with higher modulation size. For BP2, the convergence behavior with different number of antennas looks similar to that of BP3 (specifically for QPSK), while it is quite different from that of BP3 with different modulation sizes. Specifically speaking, the performance of BP2 over the fully connected graph does not get better with more than 3 or 4 iterations. Rather, it is degraded especially for higher order modulation. In BP3 over the ring-type graph, however, no degradation has been observed with more iterations. As mentioned previously, the condition for sure convergence in loopy graph is still an open problem. And the difference in the convergence behavior of BP2 and BP3 can only be explained by the note in \cite{R21}, i.e., in densely connected graph, the messages may circulate along the short loops preventing the eventual convergence. Fig.~\ref{Fig:4}(a) of the fully connected pair-wise graph is more densely connected than Fig.~\ref{Fig:4}(b) of the ring-type pair-wise graph. Although the message will propagate faster in densely connected graph than in sparsely connected graph, resulting in faster convergence, the message circulation may prevents the eventual convergence with more iterations.

Another point we need to note is that, in BP3, one can allow a slight performance degradation for a large computational saving. Certainly, as shown in Fig.~\ref{Fig_iter2} and implicated in Fig.~\ref{Fig:8}, at least 10 iterations is needed for eventual convergence for 16QAM. However, comparing the required SNR for, say, $10^{-4}$ BER, the difference between 6 and 12 iterations is less than 0.1 dB while, in computational burden, 12 iterations is twice that of 6.

Fig.~\ref{Fig_iter3} shows the convergence behavior of the Gaussian BP discussed in Section IV. We plotted the bit error rate performance of the Gaussian BP over the fully-connected and ring-type pair-wise graph, respectively, with various numbers of iterations. As can be seen in the figure, both GBP 2G and 3G converge to the performance of linear MMSE detector, though it requires many more iterations than those of BP2 and BP3. The only difference between GBP 2G and 3G is the rate of convergence. On the other hand, in the high SNR region, the performance appears to worsen with higher SNR. However, it should be noted that with higher SNR eventual convergence simply requires more iterations.

\section{Conclusions}
In this paper, low complexity, iterative MIMO detection algorithms were derived as a message passing over the pair-wise bipartite graphs with the translation functions that are obtained by marginalizing the posterior joint probability density under the Gaussian input assumption. We investigated two models, the fully-connected and ring-type pair-wise graph. The latter is shown to be an extension of the previous work in~\cite{R9, R10}. The two pair-wise graphical models are rather sparse in the sense that the number of edges connected to an observation node, i.e., edge degree, is only two and, thus, the message passing becomes much easier than that over the fully connected bipartite graph. 

We also investigated the proposed algorithm under Gaussian input assumption. It was shown that, for the Gaussian BP over the ring-type pair-wise graph, the mean converges to the linear MMSE estimates, even though the variance converges to a different value from the MMSE obtained by MAP estimation. These results are in line with those in~\cite{R16,R17,R24, R29}. Gaussian BP over the fully-connected pair-wise graph shows a faster convergence rate than Gaussian BP over the ring-type graph. 

As proved in this paper, the convergence of the Gaussian BP 3G over the ring-type graph is guaranteed. This does not, however, appear to be the case for non-Gaussian message. The performance of BP 2 for non-Gaussian case degrade with more than four iterations. This phenomenon might stem from the short cycles in their graphical model and may be avoided by utilizing ``global iteration" between MIMO detection and channel decoding. That is, by employing an appropriate channel code and interleaver, message circulation along local short cycles can be broken up not only for steady convergence but also for better performance. We leave this for our future work.

%

\begin{appendix}[DETAILED DERIVATIONS OF (44) AND THE GAUSSIAN BP]
To derive (\ref{eq45}) and the Gaussian BP rule, (\ref{eq58})-(\ref{eq67}), we use the properties of the Gaussian PDF in~\cite{R21}, some of which are as follows
\begin{align}
1)~&{\mathcal {CN}}\left(x;\mu,{\sigma}^2\right)={\mathcal {CN}}\left(\mu;x,{\sigma}^2\right)
={\mathcal {CN}}\left(x-\mu;0,{\sigma}^2\right)={\mathcal {CN}}\left(\mu-x;0,{\sigma}^2\right) \nonumber \\
2)~&{\mathcal {CN}}\left(ax+b;\mu,{\sigma}^2\right)={\mathcal {CN}}\left(x;\frac{\mu-b}{a},
\frac{{\sigma}^2}{|a|^2}\right) \nonumber\\
\begin{split}
3)~&{\mathcal {CN}}\left(x;\mu_1,{\sigma}_1^2\right)\cdot{\mathcal {CN}}\left(x;\mu_2,{\sigma}_2^2\right) 
={\mathcal{CN}}\left(x;\frac{\sigma_1^{-2}\mu_1+\sigma_2^{-2}\mu_2}{\sigma_1^{-2}+\sigma_2^{-2}},
\frac{1}{\sigma_1^{-2}+\sigma_2^{-2}}\right)\cdot{\mathcal{CN}}(\mu_1;\mu_2,\sigma_1^2+\sigma_1^2) \nonumber
\end{split}
\\4)~&\int{\mathcal{CN}}\left(x;\mu_1,\sigma_1^2\right)\cdot{\mathcal{CN}}\left(x;\mu_2,\sigma_2^2\right)
\cdot dx={\mathcal{CN}}\left(\mu_1;\mu_2,\sigma_1^2+\sigma_1^2\right). \nonumber
\end{align}

Using these, (\ref{eq45}) is obtained by direct computation as follows.
\begin{align}
\tilde{p}(x_j|x_i,y_{j|i}^\prime) =& \frac{{\mathcal{CN}}\left(y_{j|i}^\prime;a_{j|i,j}x_j+a_{j|i,i}x_i,\sigma_{j|i}^2\right)
\cdot{\mathcal{CN}}\left(x_j;0,1\right)}
{{\mathcal{CN}}\left(y_{j|i}^\prime;a_{j|i,i}x_i,\sigma_{j|i}^2+|a_{j|i,j}|^2\right)}\nonumber\\
=&\frac{{\mathcal{CN}}\left(x_j;\frac{1}{a_{j|i,j}}\left(y_{j|i}^\prime-a_{j|i,i}x_i\right),\frac{\sigma_{j|i}^2}
{|a_{j|i,j}|^2}\right)\cdot{\mathcal{CN}}\left(x_j;0,1\right)}
{{\mathcal{CN}}\left(y_{j|i}^\prime;a_{j|i,i}x_i,\sigma_{j|i}^2+|a_{j|i,j}|^2\right)}\nonumber
\end{align}
\begin{align}
=&\mathcal{CN}\left(x_j;\frac{\frac{1}{a_{j|i,j}}\left(y_{j|i}^\prime-a_{j|i,i}x_i\right)}
{\frac{\sigma_{j|i}^2}{|a_{j|i,j}|^2}+1},
\frac{\sigma_{j|i}^2}{|a_{j|i,j}|^2}\cdot\left(1+\frac{\sigma_{j|i}^2}{|a_{j|i,j}|^2}\right)^{-1}\right)\nonumber 
\cdot\frac{\mathcal{CN}\left(y_{j|i}^\prime-a_{j|i,i}x_i;0,|a_{j|i,j}|^2+\sigma_{j|i}^2\right)}
{\mathcal{CN}\left(y_{j|i}^\prime;a_{j|i,i}x_i,\sigma_{j|i}^2+|a_{j|i,j}|^2\right)} \nonumber \\
=&\mathcal{CN}\left(x_j;\frac{a_{j|i,j}^*}{\sigma_{j|i}^2+|a_{j|i,j}|^2}\left(y_{j|i}^\prime-a_{j|i,i}x_i\right),
\frac{\sigma_{j|i}^2}{\sigma_{j|i}^2+|a_{j|i,j}|^2}\right). \label{eq:Ap5}
\end{align}
\newline
Now, we derive the message update rule of the Gaussian BP. To this end, we divide the message update rule in BP2, (\ref{eq46}), into two steps, i.e., 
the extrinsic information computation, $\lambda_{i \to j}(x_i) = \prod\nolimits_{k \in V\left( i \right)\backslash j} {\pi _{k \to i} } \left( {x_i } \right)$,
and the message translation step, 
$\pi _{i \to j} (x_j ) = \alpha \sum\nolimits_{x_i  \in \Xi } {\tilde{p}(x_j | x_i, \pmb{y} )} \lambda_{i \to j}(x_i) $. 
Assuming the Gaussian messages,
$\pi _{k \to i} (x_i ) = \mathcal{CN}( {x_i ;\mu _{\pi ,k \to i}^{} ,{\rm{ }}\sigma _{\pi ,k \to i}^2 } )$, 
the former is given by
\begin{align}
\lambda_{i \to j}(x_i) &= \prod\limits_{k \in V(i)\backslash j} {\pi _{k \to i} (x_i )} 
 = \prod\limits_{k \in V(i)\backslash j} {\mathcal{CN}\left( {x_i ;\mu _{\pi ,k \to i}^{} ,{\rm{ }}\sigma _{\pi ,k \to i}^2 } \right)}  \nonumber \\
& \propto \mathcal{CN}\left( {x_i ;\frac{{\sum\nolimits_{k \in V(i)\backslash j} {\sigma _{\pi ,k \to i}^{ - 2} \mu _{\pi ,k \to i}^{} } }}{{\sum\nolimits_{k \in V(i)\backslash j} {\sigma _{\pi ,k \to i}^{ - 2} } }},}\right. 
    {{{\left( {\sum\limits_{k \in V(i)} {\sigma _{\pi ,k \to i}^{ - 2}} } \right)}^{ - 1}}}\Bigg)  
= \mathcal{CN}\left( {x_i ;\mu _{\lambda ,i \to j}^{} ,{\rm{ }}\sigma _{\lambda ,i \to j}^2 } \right) \label{eqa08}
\end{align}
For the message translation, we first rewrite (\ref{eq:Ap5}) as
\begin{align}
\tilde{p}(x_j \left| {x_i ,y'_{j|i} } \right.)  \nonumber &
= \mathcal{CN}\left( {x_j ;\frac{{a_{j|i,j}^* }}{{\sigma _{j|i}^2  + |a_{j|i,j} |^2 }}\left( {y'_{j|i}  - a_{j|i,i} x_i } \right),}\right. \frac{{\sigma _{j|i}^2 }}{{\sigma _{j|i}^2  + |a_{j|i,j} |^2 }}\Bigg)  \nonumber \\
&=\mathcal{CN}\left( {\frac{{\sigma _{j|i}^2  + |a_{j|i,j} |^2 }}{{a_{j|i,j}^* }}x_j ;\left( {y'_{j|i}  - a_{j|i,i} x_i } \right),}\right.
\frac{{\sigma _{j|i}^2 (\sigma _{j|i}^2  + |a_{j|i,j} |^2 )}}{{|a_{j|i,j} |^2 }} \Bigg)  \nonumber \\
&= \mathcal{CN}\left( {x_i ;\frac{1}{{a_{j|i,i} }}\left( {y'_{j|i}  - \frac{{\sigma _{j|i}^2  + |a_{j|i,j} |^2 }}{{a_{j|i,j}^* }}x_j } \right),}\right.
\frac{{\sigma _{j|i}^2 (\sigma _{j|i}^2  + |a_{j|i,j} |^2 )}}{{|a_{j|i,j} |^2 |a_{j|i,i} |^2 }} \Bigg). \label{eqa10}
\end{align}
Then, by plugging (\ref{eqa08}) and (\ref{eqa10}) into (\ref{eq46}) and changing the summation into integral,\footnote{Since the
input is now continuous Gaussian random variable, we need to change the summation in (\ref{eq45}) into integral.} we have
\begin{align}
\pi&_{i\rightarrow j}(x_j)=\int_{x_i}p\left(x_j|x_i,y_{ji}^\prime\right)\cdot\lambda_{i\rightarrow j}(x_i)\cdot dx_i \nonumber \\
=&\int_{x_i}\mathcal{CN}\left(x_i;\frac{1}{a_{j|i,i}}\left(y_{j|i}^\prime-\frac{\sigma_{j|i}^2+|a_{j|i,j}|^2}{a_{j|i,j}^*}x_j\right),
\frac{\sigma_{j|i}^2\left(\sigma_{j|i}^2+|a_{j|i,j}|^2\right)}{|a_{j|i,j}|^2|a_{j|i,i}|^2}\right)
\cdot\mathcal{CN}\left(x_i;\mu_{\lambda,i\rightarrow j},\sigma_{\lambda,i\rightarrow j}^2\right)\cdot dx_i \nonumber
\\
%
&=\mathcal{CN}\left(y_{j|i}^\prime-\frac{\sigma_{j|i}^2+|a_{j|i,j}|^2}{a_{j|i,j}^*}x_j-a_{j|i,i}\mu_{\lambda,i\rightarrow j};0,
\frac{\sigma_{j|i}^2\left(\sigma_{j|i}^2+|a_{j|i,j}|^2\right)}{|a_{j|i,j}|^2}+|a_{j|i,i}|^2\sigma_{\lambda,i\rightarrow j}^2\right) \nonumber\\
&=\mathcal{CN}\left(x_j;\frac{a_{j|i,j}^*}{\sigma_{j|i}^2+|a_{j|i,j}|^2}\left(y_{j|i}^\prime-a_{j|i,i}\mu_{\lambda,i\rightarrow j}
\right),\frac{\sigma_{j|i}^2}{\sigma_{j|i}^2+|a_{j|i,j}|^2}+\frac{|a_{j|i,j}|^2 |a_{j|i,i}|^2\sigma_{\lambda,i\rightarrow j}^2}{\left(\sigma_{j|i}^2+|a_{j|i,j}|^2\right)^2}
\right) \label{eqa11}\\
&=\mathcal{CN}\left(x_j;\mu_{\pi,i\rightarrow j},\sigma_{\pi,i\rightarrow j}^2\right). \label{eqa12}
\end{align}
By comparing the mean and variance in (\ref{eqa11}) and (\ref{eqa12}), we obtain the message passing rules of (\ref{eq58}) 
and (\ref{eq59}), respectively. The belief in (\ref{eq60}) and  (\ref{eq61}) can be obtained similarly to the derivation in (\ref{eqa08}).

\end{appendix}

\renewcommand{\baselinestretch}{1.0}
\bibliographystyle{IEEEbib}

\bibliography{references_combined}
\begin{table}[!t]
\caption{An approximate number of operations required for detection in a single $M \times M$ MIMO channel (modulation size of $2^m$).}
\begin{center}
\begin{tabular}{|c|c|c|c|c|}
\hline
Detector & Linear preprocessing & Post detection & M=6, m=2, & M=4, m=4\\
& &  & $\nu_1$=4, $\nu_2$ = 6 & $\nu_1$=4, $\nu_2$=6 \\ 
\hline \hline
MMSE & $24M^3 + 18M^2 + 2M$ & $6M\cdot2^m$ & 5,988 & 2,216 \\
\hline
ML & 0 & $2^{mM}\cdot (8M^2 + 9M)$ & 1,454,080 & 11,337,728 \\
\hline
BP2 & $16M^4+60M^3+437M^2-486M$ & $[2^{2m}\cdot(2\nu_1 + 21) + 2^m\nu_1]\cdot M(M-1)$ & 61,032 & 102,648\\
\hline
BP3 & $56M^3 + 113M^2 + 914M$ & $2^{2m} \cdot (2\nu_2 + 21) \cdot 2M$ & 27,984 & 76,632\\ 
\hline
\end{tabular}
\end{center}
\label{default}
Parameters ${\nu_1}$ and ${\nu_2}$ are the number of iterations for the BP~2 and BP~3, respectively.
\end{table}%


\begin{figure}[!t]
  \centerline{\resizebox{0.65\columnwidth}{!}{\includegraphics{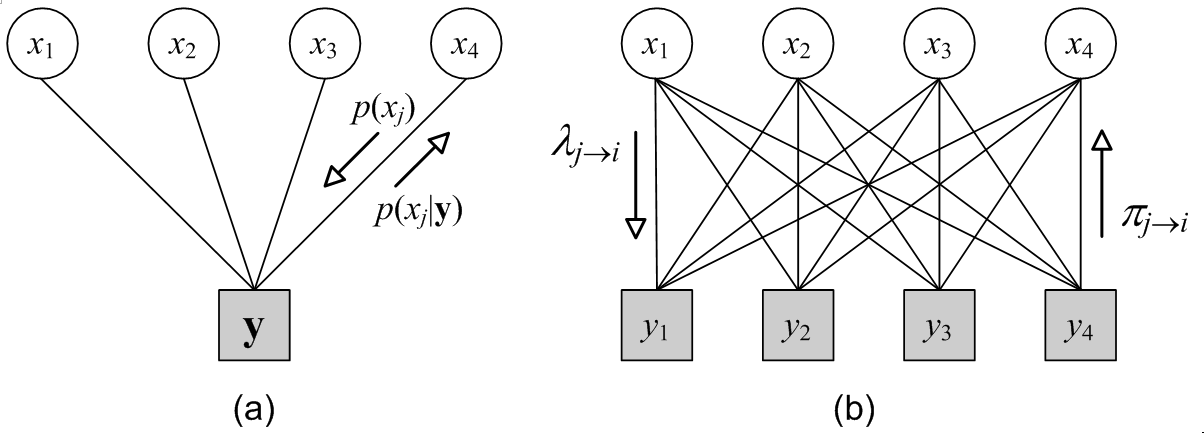}}}
   \caption{Bipartite graphs for a $4 \times 4$ MIMO channel. The circles are variable nodes corresponding to a data symbol and the boxes labeled by ${\pmb y}$ and $y_j$ are observation nodes corresponding to the received signal.}
   \label{Fig:1}
\end{figure}
\begin{figure}[!t]
  \centerline{\resizebox{0.55\columnwidth}{!}{\includegraphics{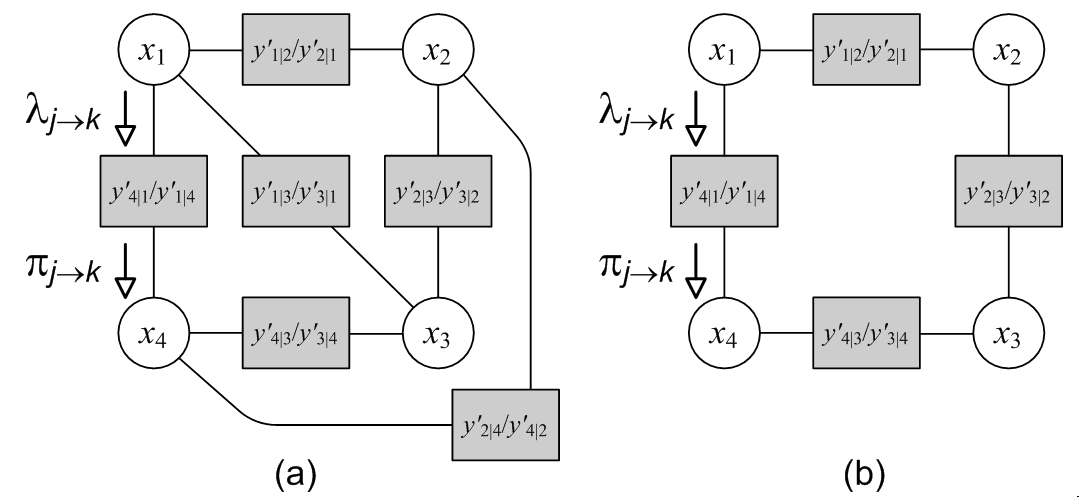}}}
   \caption{The bipartite graph for (a) fully-connected pair-wise model and (b) ring-type pair-wise model, respectively, for a $4 \times N$ MIMO channel.}
   \label{Fig:4}
\end{figure}

\begin{figure}[!t]
  \centerline{\resizebox{0.7\columnwidth}{!}{\includegraphics{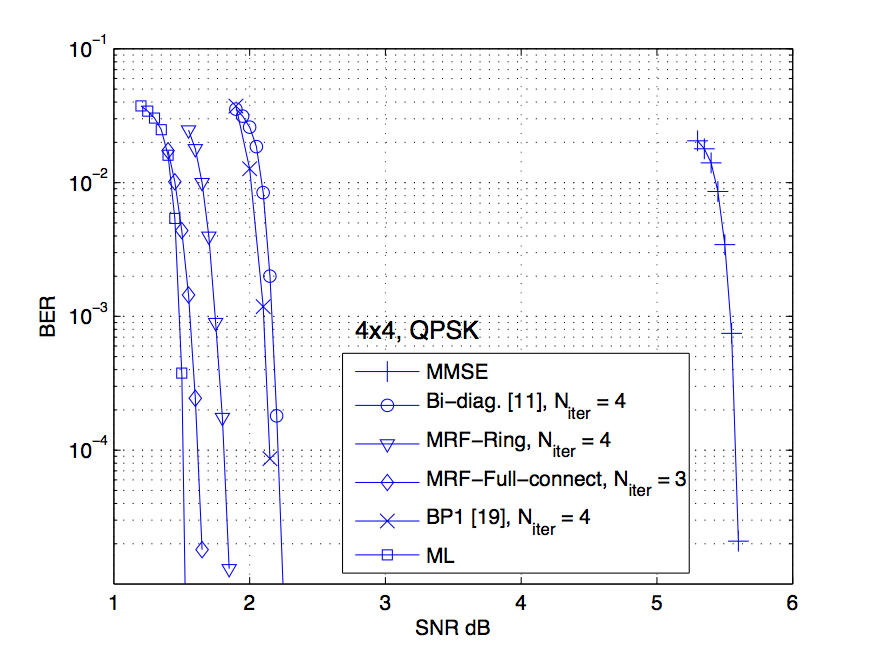}}}
   \caption{A comparison of bit error rate performance of MMSE, MAP and the proposed detectors as a function of SNR $(1/\sigma^2)$; $4\times 4$ antenna configuration, QPSK modulation with DVB-S2 LDPC code of rate 3/4 (length 64800).}
   \label{Fig:5}
\end{figure}

\begin{figure}[!t]
  \centerline{\resizebox{0.7\columnwidth}{!}{\includegraphics{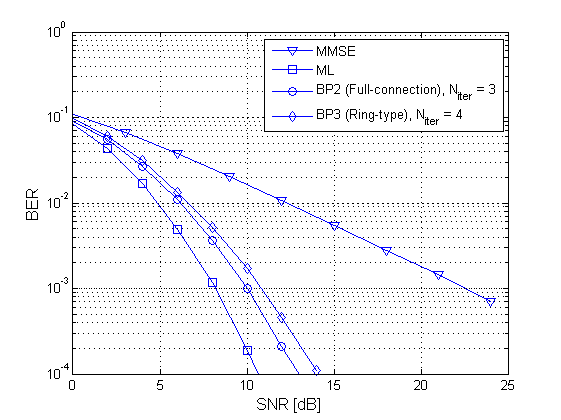}}}
   \caption{A comparison of bit error rate performance of MMSE, MAP, and the proposed detectors as a function of SNR $(1/\sigma^2)$; $4\times 4$ antenna configuration, QPSK modulation, no channel coding.}
   \label{Fig:8a}
\end{figure}

\begin{figure}[!t]
  \centerline{\resizebox{0.7\columnwidth}{!}{\includegraphics{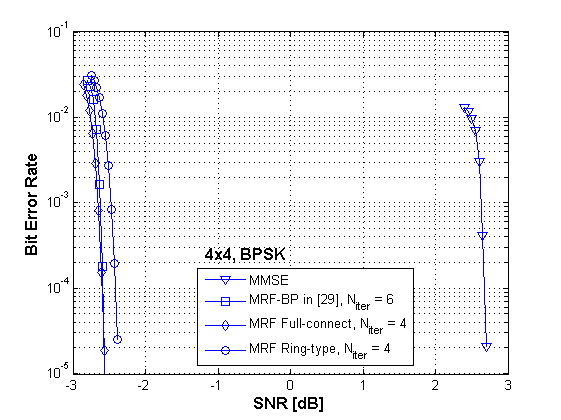}}}
   \caption{A comparison of bit error rate performances of MMSE, the proposed detectors and the detector in \cite{R33} with a damping factor 0.45.}
   \label{Fig:r2}
\end{figure}

\begin{figure}[!t]
  \centerline{\resizebox{0.7\columnwidth}{!}{\includegraphics{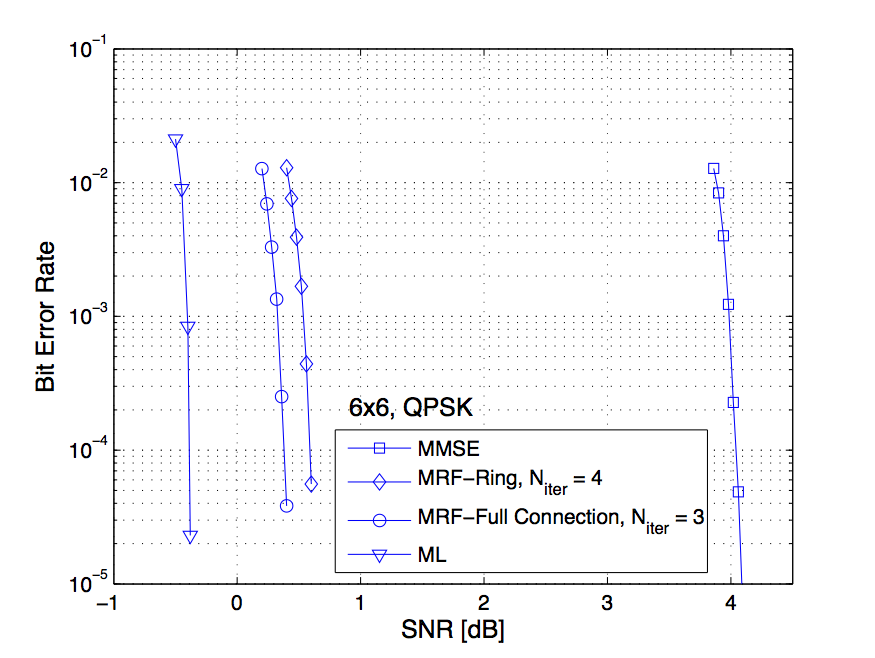}}}
   \caption{A comparison of bit error rate performance of MMSE, MAP and the proposed detectors as a function of SNR $(1/\sigma^2)$; $6\times 6$ antenna configuration, QPSK modulation with DVB-S2 LDPC code of rate 3/4 (length 64800).}
   \label{Fig:6}
\end{figure}

\begin{figure}[!t]
  \centerline{\resizebox{0.7\columnwidth}{!}{\includegraphics{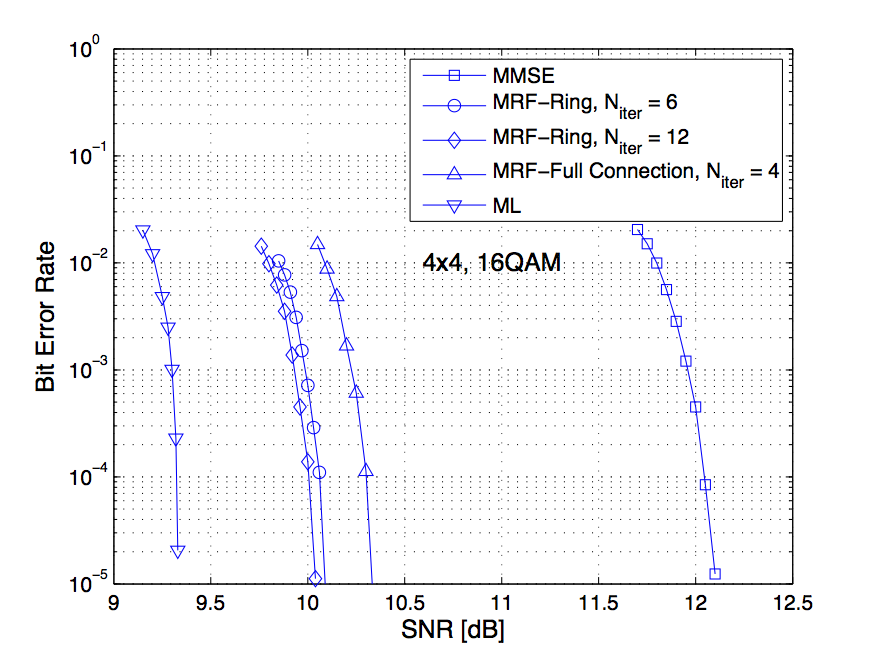}}}
   \caption{A comparison of bit error rate performance of MMSE, MAP, and the proposed detectors as a function of SNR $(1/\sigma^2)$; $4\times 4$ antenna configuration, 16QAM modulation with DVB-S2 LDPC code of rate 3/4 (length 64800).}
   \label{Fig:8}
\end{figure}


\begin{figure}[!t]
  \centerline{\resizebox{1\columnwidth}{!}{\includegraphics{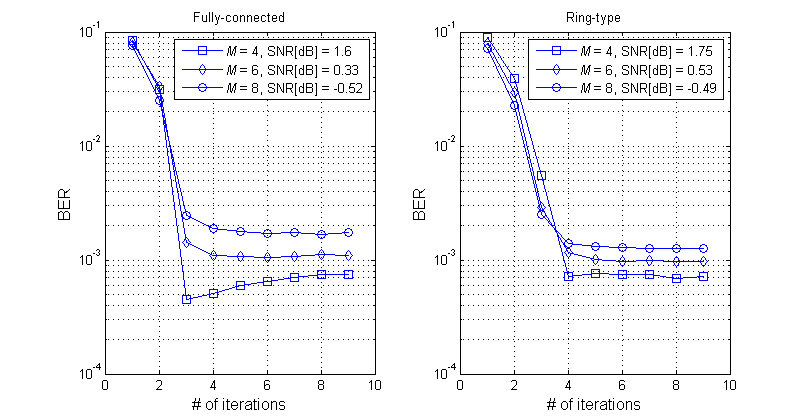}}}
   \caption{Convergence property of the proposed algorithm with the number of antennas $M =$ 4, 6 and 8; QPSK modulation, DVB-S2 LDPC code of rate 3/4 (length 64800).}
   \label{Fig_iter1}
\end{figure}

\begin{figure}[!t]
  \centerline{\resizebox{1\columnwidth}{!}{\includegraphics{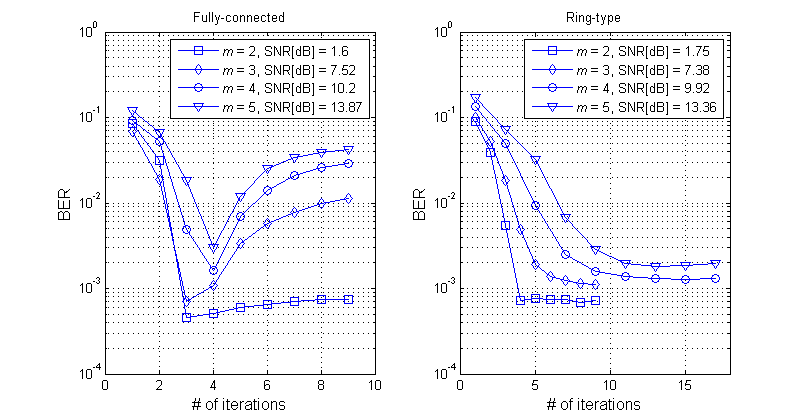}}}
   \caption{Convergence property of the proposed algorithm with modulation size of $2^m$ for $m=$ 2, 3, 4 and 5;  $4\times 4$ antenna configuration, DVB-S2 LDPC code of rate 3/4 (length 64800)}
   \label{Fig_iter2}
\end{figure}


\begin{figure}[!t]
  \centerline{\resizebox{1\columnwidth}{!}{\includegraphics{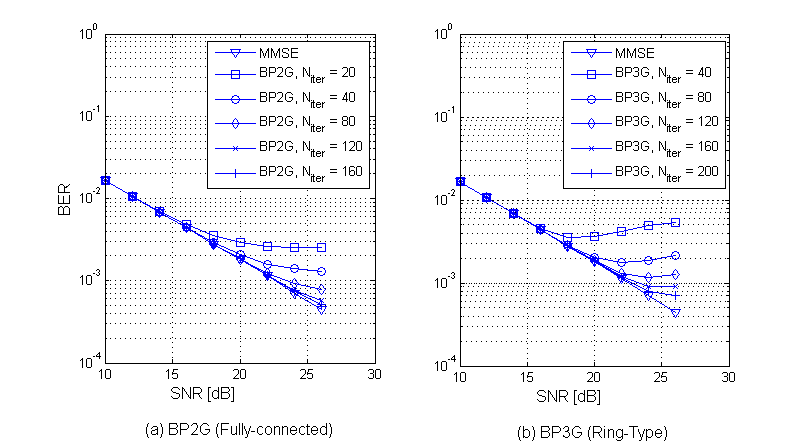}}}
   \caption{Bit error rate performance of the Gaussian BP over the fully-connected and ring-type pair-wise graph, respectively; $4\times 4$ antenna configuration, QPSK modulation, no channel coding.}
   \label{Fig_iter3}
\end{figure}

\end{document}